\title{Alternation Is Strict For Higher-Order Modal Fixpoint Logic}
\author{Florian Bruse
\institute{Universit\"at Kassel\\ Kassel, Germany}
\email{florian.bruse@uni-kassel.de}
}
\newcommand{\Transsys}{\ensuremath{\mathcal{T}}\xspace}
\newcommand{\hlmu}{\texorpdfstring{\ensuremath{\operatorname{HFL}}\xspace}{HFL}}
\newcommand{\flc}{\ensuremath{\operatorname{FLC}}\xspace}
\newcommand{\apkm}{\texorpdfstring{\ensuremath{\operatorname{APKA}}\xspace}{APKA}}
\newcommand{\aapkm}{\texorpdfstring{\ensuremath{\operatorname{APKA}}\xspace}{APKA}}
\newcommand{\verifier}{\ensuremath{\exists}\xspace}
\newcommand{\spoiler}{\ensuremath{\forall}\xspace}
\newcommand{\Prop}{\operatorname{Pr}}
\newcommand{\true}{\top}
\theoremstyle{plain}
\newtheorem{theorem}{Theorem}
\newtheorem{lemma}[theorem]{Lemma}
\newtheorem{definition}[theorem]{Definition}
\newtheorem{example}[theorem]{Example}
\newtheorem{remark}[theorem]{Remark}
\newtheorem{observation}[theorem]{Observation}
\newtheorem{corollary}[theorem]{Corollary}
\begin{document}

\maketitle

\begin{abstract}
We study the expressive power of Alternating Parity Krivine Automata (\apkm), which provide operational semantics to Higher-Order Modal Fixpoint Logic (\hlmu). \apkm consist of ordinary parity automata extended by a variation of the Krivine Abstract Machine. We show that the number and parity of priorities available to an \apkm form a proper hierarchy of expressive power as in the modal $\mu$-calculus. This also induces a strict alternation hierarchy on \hlmu. The proof follows Arnold's (1999) encoding of runs into trees and subsequent use of the Banach Fixpoint Theorem.
\end{abstract}

\section{Introduction}

Parity automata provide popular operational semantics for the modal $\mu$-calculus and, hence, for all regular properties over trees. They are equivalent to most other acceptance modes with the exception of B\"{u}chi automata \cite{rabin70}. However, since parity automata can only express regular properties, extending their expressive power, or extending them to cover stronger logics, is the subject of ongoing research. For example, visibly pushdown automata \cite{DBLP:conf/stoc/AlurM04} allow the addition of a limited pushdown stack but tie the stack operations to different and disjoint parts of the alphabet. 

In this paper, we revisit our previous work on extending parity automata by a variant of the Krivine Abstract Machine \cite{Krivine}, which incorporates a simply typed lambda calculus into the semantics of the automaton model. The resulting \emph{Alternating Parity Krivine Automata}  (\aapkm) yield operational semantics for Higher-Order Modal Fixpoint Logic (\hlmu) \cite{viswaviswa}. The acceptance condition of \aapkm is a stair parity condition over an acceptance game. The stair parity condition resembles that of visibly pushdown automata, but it is not tied to any alphabet symbols or tree labels, but rather emerges via the bookkeeping done by the Krivine Machine part. This automaton model is very expressive: Properties such as uniform inevitability or the presence of a given property in a level that is a power of two are easily expressible. This expressive power comes at a price, since emptiness of \aapkm, which is equivalent to satisfiability of \hlmu-formulae, is undecidable. 

A key improvement over the variant of \apkm presented in \cite{mfcs} is that in this paper, the state space of the automaton is not restricted to a tree-like structure inherited from \hlmu-formulae, but can take the form of any graph, just like an ordinary parity automaton is less restricted in structure than a formula of the modal $\mu$-calculus. Since in the new variant of \apkm, precedence between states representing different fixpoints can not be inferred from their position in a syntax tree, it is given explicitly via a parity labeling of states. This has the advantage that the alternation class of an automaton, or that of any equivalent formula, can be defined via the number of its priorities, while for formulae, alternation can be hard to gauge syntactically. Already for the modal $\mu$-calculus, syntactic criteria to define alternation classes can be quite complex \cite{DBLP:conf/lics/EmersonL86, DBLP:journals/tcs/Niwinski97}. On the automaton side of things, however, characterization via the number of priorities makes things much easier. Translations from \aapkm into \hlmu and vice versa are readily available and any alternation hierarchy for \apkm induces an alternation hierarchy on \hlmu. This settles the question posed in \cite{mfcs} on how to properly define alternation classes for \hlmu.

We find that for \aapkm, adding more priorities increases expressive power. 
 The original strictness result for parity automata has a beautiful proof \cite{arnold} involving the Banach Fixpoint Theorem, which also has been adapted to Fixpoint Logic with Chop \flc \cite{infcomp06}. Our strictness proof proceeds in a similar manner: Given an infinite binary tree and an \aapkm of suitable vocabulary, we construct another infinite binary tree which encodes the acceptance game of the run of the \aapkm. Given a vocabulary tailored to a specific alternation class, we construct an automaton which accepts such a game tree if and only if the original automaton accepts the original tree. This operation induces a contraction in the complete metric space of infinite binary trees, which, by the Banach Fixpoint Theorem, has a fixpoint. We show that on this fixpoint, no automaton with less priorities or with the same amount of priorities, but flipped parity, can be equivalent to the given meta-automaton. 

While strictness of the alternation hierarchy for \apkm and, hence, for \hlmu is not unexpected, such a result is not obvious. It is well known that adding more priorities to a parity automaton or more Rabin pairs to a Rabin automaton increases their expressive power, just as extra fixpoint alternation in the modal $\mu$-calculus does \cite{Bradfield}. However, adding extra fixpoint nesting does not always yield more expressive power: The Immerman-Vardi Theorem entails that, over finite ordered structures, first-order logic with least and greatest fixpoints  
is as strong as first-order logic with only one least fixpoint. Also the alternation hierarchy of the modal $\mu$-calculus itself collapses to the alternation-free fragment over certain classes of structures, for example the class of infinite words \cite{DBLP:conf/concur/Kaivola95} and, more generally, classes of structures with restricted connectivity \cite{DBLP:journals/tcs/GutierrezKL14}. Preliminary work also shows that alternation for \hlmu collapses over \emph{finite} structures. It should also be noted that, just like with Fixpoint Logic with Chop \cite{infcomp06}, formulas that are hard for alternation classes for the modal $\mu$-calculus are not necessary suitable candidates for higher-order logics. This is because these formulas are not designed for the higher-order features of \hlmu.

The plan of the paper is as follows: In Section~\ref{sec:apkm}, we define \aapkm and their acceptance condition for infinite binary trees. We have a look at their relation to \hlmu in Section~\ref{sec:hfl}. In the following section, we define alternation classes and present a class of trees that encode runs of \aapkm from a given alternation class. For each alternation class we also construct meta-automata that accept such a tree encoding a run if and only if the run was accepting. This allows us to prove strictness of the alternation hierarchy. The paper closes with a brief discussion of important points. 

\section{Alternating Parity Krivine Automata}
\label{sec:apkm}

Note that we previously defined \apkm differently. This work supersedes earlier definitions in \cite{mfcs}.
For ease of exposition, and since the alternation hierarchy argument is developed over the class of fully infinite binary trees, we only consider automata over labeled fully infinite binary trees. The concept of \aapkm extends naturally to trees of unrestricted branching factor and, or any class of Kripke structures

Fix some set $\mathcal{P}$ of propositions. An infinite binary tree with labels in $\mathcal{P}$ (just tree or $\mathcal{P}$-tree from now on) is given by a function $\mathcal{T}$ from the set $\{0,1\}^*$ of all $\{0,1\}$ words into $2^\mathcal{P}$. The root of the tree is identified with $\epsilon$ and the left and right successors of $t \in \{0,1\}^*$ are $t0$ and $t1$, respectively. We say that $P \in \mathcal{P}$ holds at $t$ (written $\mathcal{T},t\models P$) if $P \in \mathcal{T}(t)$. The pair $\mathcal{T},t$ refers to the subtree induced by $t$.

Simple types are defined inductively via
$\tau \Coloneqq \Prop \mid \tau\to\tau.$
We often refer to $\Prop$ as \emph{ground type}. The operator $\to$ is right-associative, so any type can be written as $\tau_1 \to \dotsb \to \tau_n \to \Prop$. The order $\operatorname{ord}$  is defined inductively via $ \operatorname{ord}(\Prop) = 0$ and $\operatorname{ord}(\tau_1 \to \dotsb \to \tau_n \to \Prop) = \max(\operatorname{ord}(\tau_1),\dotsc,\operatorname{ord}(\tau_n))+1$. The set of types is partially ordered via $\tau, \tau' < \tau \to \tau'$.
The intended semantics for the ground type over a tree $\mathcal{T}$ is a set of subtrees of $\mathcal{T}$, the intended semantics for a type of the form $\tau_1\to\dotsb\to\tau_n\to\Prop$ is that of a monotone function consuming arguments of types $\tau_1,\dotsc,\tau_n$ and returning a set of subtrees of $\mathcal{T}$.

\subsection{Definition}

\begin{figure}
\caption{Typing Rules for \aapkm-transition relations.}
\label{aapka-typing-rules}
\begin{mathpar}
\inferrule{ }{\Sigma \vdash P\colon\Prop} \and
\inferrule{ }{\Sigma \vdash \neg P\colon\Prop} \and
\inferrule{ }{\Sigma \vdash f^{i}_j\colon\tau^i_j} \and
\inferrule{ }{\Sigma \vdash X_i\colon\tau^i_1 \to \dotsb \to \tau^i_{n_{X_i}} \to \Prop} \and
\inferrule{\Sigma\vdash\varphi\colon\Prop}{\Sigma \vdash \Diamond_R \varphi \colon \Prop} \and
\inferrule{\Sigma\vdash\varphi\colon\Prop}{\Sigma \vdash \Box_R \varphi \colon \Prop} \and
\inferrule{\Sigma\vdash\varphi_1\colon\Prop \\  \Sigma\vdash\varphi_2\colon\Prop}{\Sigma \vdash \varphi_1 \vee \varphi_2 \colon \Prop} \and
\inferrule{\Sigma\vdash\varphi_1\colon\Prop \\  \Sigma\vdash\varphi_2\colon\Prop}{\Sigma \vdash \varphi_1 \wedge \varphi_2 \colon \Prop} \and
\inferrule{\Sigma\vdash \varphi\colon\tau \rightarrow \tau' \\ \Sigma\vdash\psi\colon\tau}{\Sigma \vdash (\varphi\,\psi) \colon \tau'} 

\end{mathpar}
\end{figure}

Fix a finite set $\mathcal{X} = \{X_1,\dotsc,X_n\}$ of states, or fixpoint variables, and a finite set $\mathcal{F}$ which is the disjoint union $\bigcup_n \mathcal{F}_{X_n}$ of lambda variables, where $\mathcal{F}_{X_i} =  \{f^{X_i}_1,\dotsc, f^{X_i}_{n_i}\}$.

For each $1 \leq i \leq n$ , let $\varphi_i$ be derived from the grammar
\[
 \varphi_i   \Coloneqq    P \mid \neg P \mid \Diamond \varphi_i \mid \Box \varphi_i \mid \varphi_i \vee \varphi_i \mid \varphi_i \wedge \varphi_i \mid f \mid X \mid (\varphi_i\,\varphi_i)  \]
 where $P \in \mathcal{P}$, $X \in \mathcal{X}$ and $f \in \mathcal{F}_{X_i}$. 
 

An \emph{Alternating Parity Krivine Automaton} (\aapkm) of index $m$ and order $k$ is a five-tuple the form 
$(\mathcal{X}, \Lambda, X_{\text{init}},\delta,(\tau_X)_{X \in \mathcal{X}})$
where $\mathcal{X}$ is as above, $X_{\text{init}} \in \mathcal{X}$ is the initial state, $\Lambda\colon\mathcal{X}\to\{1,\dotsc,m\}$ or  $\Lambda\colon\mathcal{X}\to\{0,\dotsc,m-1\}$ labels each fixpoint variable with a priority, the $\tau_X = \tau^X_1 \to \dotsb \to \tau^X_{n_X} \to \Prop$ of order at most $k$ specify the types of the fixpoints, the type of the initial state $\tau_{X_{\text{init}}}$ is $\Prop$, and $\delta$ is the transition relation that maps $X_i$ to $\varphi_i$ and is such that $\emptyset\vdash \delta(X)\colon\Prop$ 
for each $X \in \mathcal{X}$, according to the typing rules reproduced in Figure~\ref{aapka-typing-rules}. 
The state space of the automaton is $\mathcal{Q} = \mathcal{X} \cup \bigcup_{X \in \mathcal{X}} \operatorname{sub}(\delta(X))$, where $\operatorname{sub}(\psi)$ is the set of subformulae of $\psi$. 

\begin{example}
\label{example-inf-start}
Let $\mathcal{X} = \{I, X, Y\}$, let $\tau_I = \tau_Y = \Prop$, $\tau_X = \Prop\to\Prop$, let $\Lambda(I) = \Lambda(X) = 1, \Lambda(Y) = 0$. Let \begin{align*}
\delta(I) =&\, \emptyset &\mapsto&\, (X\, \neg P)  \\
\delta(X) =&\, x\colon \Prop &\mapsto&\, (\Diamond x) \vee \Box Y \\
\delta(Y) =&\, \emptyset &\mapsto&\, (X\,Y) 
\end{align*}
Let $\mathcal{A} = (\mathcal{X}, \Lambda, I, \delta, (\tau_X)_{X \in \mathcal{X}})$. We will see a run of $\mathcal{A}$ in Example~\ref{example-inf-cont}. This automaton corresponds to the \hlmu-formula (see Section~\ref{sec:hfl} for a definition of \hlmu) $\big(\mu X. \lambda x. \Diamond x \vee \Box \nu Y. (X\,Y)\big) \neg P$.
\end{example}

\subsection{Acceptance}

In the context of an \aapkm, an environment is either the empty environment $e_0$ or of the form $e  = (f^X_1 \mapsto (\psi_1, e_1),\dotsc,f^X_{n_X} \mapsto (\psi_{n_X},e_{n_X}),e')$ where the $\psi_i$ are in $\mathcal{Q}$, i.e., subformulae of $\delta(X)$ for some $X$. We call $e'$ the \emph{parent environment} of $e$, and any environment reachable via the irreflexive, transitive closure of this relation a \emph{predecessor} of $e$. 
A pair $(\psi_j, e_j)$ is called a closure.  
We set $e(f^X_i) = (\psi_i, e_i)$. While the set of environments never appears explicitely, we tacitly assume that at any point during a run of an \apkm, the only environments in existence are $e_0$ and any environments the automaton has created so far. This also means that all environments have only finitely many predecessors.

A configuration in a run of the automaton over some tree $\mathcal{T}$ has the form
$
(t,(Q,e), e', \Gamma, \Delta),
$
where $t$ is subtree of $\mathcal{T}$, $Q \in \mathcal{Q}$ is a subformula of $\delta(X)$ for some $X$,  $e$ and $e'$ are environments, $\Gamma$ is a possibly empty stack of closures, 
and $\Delta$ is a finite sequence of priorities. In each configuration, if the type of the current closure is $\tau_1 \mapsto \dotsb \tau_n \mapsto \Prop$ then there are $n$ elements on the stack, and their types are, from bottom to top, $\tau_1,\dotsc,\tau_n$. The latter invariant is by induction over the definition of the transition semantics.

A run over $\mathcal{T},t_0$ is a possibly infinite sequence of configurations that begins with the initial configuration $(t_0, (X_{\text{init}},e_0), e_0, \epsilon, \epsilon)$ and is produced by a two-player game between players \verifier and \spoiler. In each configuration, the next configuration is either produced deterministically, or one of the two players picks a successor. A run is accepting if \verifier wins the game according to a winning condition which we state later.

The transition semantics from $(t,(Q,e), e', \Gamma, \Delta)$ is as follows:
\begin{itemize}
\item If $Q$ is $X \in \mathcal{X}$, the automaton transitions towards $\delta(X)$. The closures on the stack are, from bottom to top, the closures $(\psi_{1},e''_{1}),\dotsc,(\psi_{n_X},e''_{n_X})$ of types $\tau_{1}^X,\dotsc,\tau_{n_X}^X$. 
 The automaton creates a new environment $e'' = (f^X_1 \mapsto (\psi_1, e''_1),\dotsc,f^X_{n_X} \mapsto (\psi_{n_X},e''_{n_X}), e')$, removes all these closures from the stack (which is now empty) and transitions to $(t,(\delta(X), e''), e'', \epsilon,\Delta')$, where  
$\Delta'$ is $\Delta$ with the priority of $Q$ appended.
\item If $Q$ is of the form $(\psi_1\, \psi_2)$, then the automaton pushes $(\psi_2,e)$ on the stack and transitions to the configuration $(t,(\psi_1,e),e',\Gamma \cdot (\psi_2,e), e', \Delta)$.
\item If $Q$ is of the form $f^X_{j}$ and not of type $\Prop$, then the automaton transitions to $(t,e(f^X_j), e', \Gamma,  \Delta)$.
\item If $Q$ is of the form $f^X_{j}$ and of type $\Prop$, and if $e(f^X_{j}) = (Q', e'')$ with $e' \not = e''$, then the automaton transitions to $(t,(Q,e), e'', \Gamma,  \Delta')$ where $e''$ is the parent of $e'$ and $\Delta'$ is $\Delta$ without the top element.
\item If $Q$ is of the form $f^X_{j}$, of type $\Prop$, and if $e(f^X_{j}) = (Q', e'')$ with $e' = e''$, then the automaton transitions to $(t,e(f^X_{j}), e', \Gamma,  \Delta)$.
\item If $Q$ is of the form $\psi_1 \vee \psi_2$ or $ \psi_1 \wedge \psi_2$ then the automaton transitions to $(t,(\psi_1,e),e',\Gamma,\Delta)$, respectively $(t,(\psi_2,e),e',\Gamma,\Delta)$, depending on \verifier's, respectively \spoiler's choice.
\item If $Q$ is of the form $\Diamond \varphi$ or $\Box \varphi$ then \verifier, respectively \spoiler, chooses a successor $t' \in \{t0,t1\}$ and the automaton transitions towards $(t',(\varphi,e),e',\Gamma, \Delta)$.\footnote{Over possibly finite trees, a player who is stuck loses the game.}
\item If $Q$ is of the form $P$ or $\neg P$ then \verifier wins if $\mathcal{T},t \models Q$ and \spoiler wins if $\mathcal{T},t\not\models Q$.
\end{itemize}
By induction, the transition relation alone determines the winner of all finite plays of the  game. The winner of an infinite play is determined by the behavior of the priority stack (see the end of this subsection).

Note that, in a departure from the usual way the Krivine Abstract Machine works, we insist that the equivalent of lambda abstraction pop the entire stack via a string of lambda abstractions implicit in each $\delta(X)$. While this is no proper restriction in expressive power, it makes bookkeeping which fixpoint is currently being computed much easier (see Definition~\ref{def:compute}).

Before we formalize the winner of an infinite play of the acceptance game, we illustrate the transition semantics via an example.

\begin{example}
\label{example-inf-cont}
Consider the infinite binary tree where only the first two levels are labeled by $P$. Since all subtrees on a level are isomorphic, we refer to the root as $r$ and all subtrees of level $i$ as $t_i$. An example run of the automaton $\mathcal{A}$ from Example~\ref{example-inf-start} over this tree is depicted in Figure~\ref{run-exml}. This example is adapted from \cite{infcomp06}. The highest priority that occurs infinitely often during the run is $1$ and, hence, odd. However,  all these occurrences of $1$ except the first two are eventually removed from the priority stack and the remaining priorities are all $0$. We will see later that this means that the automaton accepts.
\end{example}
\begin{figure}
\caption{Part of an example run of the \apkm from Example~\ref{example-inf-cont}.}
\label{run-exml}
\small
\begin{align*}
C_0 =&  (r,(I,e_0),e_0,\epsilon, \epsilon)                  & C_{10} =& (t_2,(x,e_3),e_3,\epsilon, 1101)                                \\
C_1 =&  (r,((X\,\neg P),e_0),e_0,\epsilon, 1)               & C_{11} =& (t_2,(x,e_3),e_2,\epsilon, 110)                        \\
C_2 =&  (r,(X,e_0),e_0,\neg P,   1)                         & C_{12} =& (t_2,(Y,e_2),e_2,\epsilon, 1101)                    \\
C_3 =&  (r,(((\Diamond x) \vee \Box Y),e_1),e_1,\epsilon, 11)  & C_{13} =& (t_2,((X\,Y),e_4),e_4,\epsilon, 1100)                        \\
C_4 =&  (r,((\Box Y),e_1),e_1,\epsilon, 11)                    & C_{14} =& (t_2, (X,e_4),e_4, (Y,e_4), 1100)                  \\
C_5 =&  (t_1,(Y,e_1),e_1,\epsilon, 11)                         & C_{15} =&  (t_2,(((\Diamond x) \vee \Box Y),e_5),e_5,\epsilon, 11001)     \\
C_6 =&  (t_1,((X\,Y),e_2),e_2,\epsilon, 110)             & C_{16} =&  (t_2,((\Diamond x),e_5),e_5,\epsilon, 11001)                         \\
C_7 =&  (t_1,(X,e_2),e_2,(Y,e_2), 110)                     & C_{17} =& (t_3,(x,e_5),e_5,\epsilon, 1101)                \\
C_8 =&  (t_1,(((\Diamond x) \vee \Box Y),e_3),e_3,\epsilon, 1101)   & C_{18} =& (t_3,(x,e_5),e_4,\epsilon, 1100)                        \\
C_9 =&  (t_1,((\Diamond x),e_3),e_3,\epsilon, 1101)            & C_{19} =& (t_3,(Y,e_4),e_4,\epsilon, 1100)                         \\                            
\end{align*}
\begin{align*}
e_1 &= (x\mapsto (\neg P, e_0), e_0) &  e_2 &= (\epsilon, e_1) &
e_3 &= (x \mapsto (Y, e_2),e_2) \\  e_4 &= (\epsilon, e_2) &  e_5 &= (x \mapsto (Y, e_4),e_4) & \\
\end{align*}
\end{figure}

\begin{definition}
\label{def:occurrence}
Let $C_i = (t_i,(\psi_i,e_i), e'_i, \Gamma_i,  \Delta_i)$ be a configuration. If $\psi_i = X$ then we say that $X$ \emph{occurs} in $C_i$. An \emph{occurrence} of a fixpoint variable is a configuration such that the variable occurs in that configuration. Moreover, $C_{i+1} = (t_{i}, (\delta(X),e_{i+1}), e_{i+1}, \epsilon,  \Delta_i \cdot \Lambda(X))$ is such that $e_{i+1}$ is new and there is a new priority on top of the priority stack. We say that $e_{i+1}$ and this stack element are \emph{tied to} this occurrence of $X$.
\end{definition}
The above means that there is a one-to-one correspondence between environments and occurrences of fixpoint variables: Reading a fixpoint variable $X$ in a configuration entails creation of a new environment, sometimes denoted by $e_X$, and every environment $e$ is created by an occurrence of a fixpoint variable $X_e$. Moreover, each priority on the priority stack is tied to a unique occurrence of a fixpoint and, hence, environment. The converse does not hold, since priorities can be removed from the priority stack. However, we will see below that environments that correspond to deleted priorities are not relevant to the remainder of a run.
\begin{definition}
\label{def:compute}
Let $C_i = (t_i,(\psi_i,e_i), e'_i, \Gamma_i, \Delta_i)$ be a configuration. The automaton is said to be currently \emph{computing the fixpoint $X$} if $e'_i$ was created by an occurrence of $X$. It is currently computing the environment $e'_i$, which is tied to an occurrence of $X$.
\end{definition}

\begin{lemma}
\label{lem:env-run-well}
Let $(C_i)_{i \in \mathbb{N}} = ((t_i,(\psi_i,e_i), e'_i, \Gamma_i,\Delta_i)_{i \in \mathbb{N}}$ be a run. For some $i$, let $C_i = (t_i,(\psi_i,e_i), e'_i, \Gamma_i, \Delta_i)$ be a configuration in that run. 
\begin{enumerate}
\item\label{lem:env-run-well-1} $e_i$ is either $e'_i$ or a predecessor of $e'_i$,
\item\label{lem:env-run-well-2} For $e'_i$, all variable bindings point to closures $(\psi,e')$ where $e'$ is a predecessor of $e'_i$ and the analogous property holds for all of $e'_i$s predecessors, 
\item\label{lem:env-run-well-3} all closures $(\psi,e')$ on the stack are such that $e'$ is either $e_i$ or a predecessor of $e_i$, 
\item\label{lem:env-run-well-4}  The sequence of priorities on the priority stack is exactly the sequence of priorities tied to  $e'_i$ and the sequence of its predecessors.
\end{enumerate}
\end{lemma}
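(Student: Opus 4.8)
The plan is to prove all four assertions simultaneously by induction on the index $i$ of the configuration in the run; they are mutually dependent invariants of the machine, so none can be established in isolation. The base case is $C_0 = (t_0,(X_{\text{init}},e_0),e_0,\epsilon,\epsilon)$: the computed environment equals $e_0$, which has no variable bindings and no predecessors, and both the closure stack and the priority stack are empty, so all four parts hold trivially (vacuously for parts~\ref{lem:env-run-well-2}--\ref{lem:env-run-well-4}, and because $e_0 = e'_0$ for part~\ref{lem:env-run-well-1}).

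For the induction step one assumes the four properties at $C_i = (t,(Q,e_i),e'_i,\Gamma_i,\Delta_i)$ and distinguishes cases on the shape of $Q$ following the transition semantics. The cases $Q = (\psi_1\,\psi_2)$ (here the current environment and the environment of the freshly pushed closure are both $e_i$), $Q = \psi_1 \vee \psi_2$, $Q = \psi_1 \wedge \psi_2$, $Q = \Diamond\varphi$, $Q = \Box\varphi$ (only the formula, and in the modal case the tree node, changes), and $Q = P$ or $\neg P$ (the play ends) are routine: the computed environment, the priority stack, and all but possibly the top of the closure stack are unchanged, and the invariants transfer immediately.

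The two substantial cases are fixpoint unfolding and variable substitution. If $Q = X$, a fresh environment $e''$ with parent $e'_i$ is created whose bindings are the top $n_X$ closures of $\Gamma_i$, the stack is emptied, $\Lambda(X)$ is appended to $\Delta_i$, and the machine moves to $(t,(\delta(X),e''),e'',\epsilon,\Delta_i\cdot\Lambda(X))$; then $e_{i+1}=e'_{i+1}=e''$, so part~\ref{lem:env-run-well-1} is immediate and part~\ref{lem:env-run-well-3} vacuous, part~\ref{lem:env-run-well-2} holds because parts~\ref{lem:env-run-well-1} and~\ref{lem:env-run-well-3} at $C_i$ put the popped closures' environments among $e'_i$ and its predecessors, hence among the predecessors of $e''$, while the bindings of the other predecessors of $e''$ (namely $e'_i$ and its predecessors) are controlled by part~\ref{lem:env-run-well-2} at $C_i$, and part~\ref{lem:env-run-well-4} holds because the predecessor chain of $e''$ is that of $e'_i$ with $e''$ (tied to this occurrence, priority $\Lambda(X)$) appended, matching $\Delta_{i+1}=\Delta_i\cdot\Lambda(X)$. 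If $Q = f^X_j$ of function type, the current closure is overwritten by $e_i(f^X_j)=(\psi,\hat e)$ with $\hat e$ a predecessor of $e_i$ by part~\ref{lem:env-run-well-2} at $C_i$ (applied to $e_i$, which by part~\ref{lem:env-run-well-1} is $e'_i$ or one of its predecessors), while the computed environment, closure stack and priority stack are unchanged; if $Q = f^X_j$ of ground type, either $e_i(f^X_j)$ stores exactly $e'_i$ and the machine steps into that closure with all environments coinciding, or it stores an older environment and the machine repeatedly replaces the computed environment by its parent while popping the top priority until the stored environment is reached --- which keeps part~\ref{lem:env-run-well-4} true because each popped priority is exactly the one tied to the environment being discarded (using the one-to-one correspondence between priority-stack entries and occurrences noted after Definition~\ref{def:occurrence}), and then parts~\ref{lem:env-run-well-1}--\ref{lem:env-run-well-3} follow since the new computed environment is a predecessor of the old one and the stack is untouched.

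The step I expect to be the main obstacle is the function-type subcase of variable substitution: there the computed environment is kept fixed while the environment of the current closure jumps \emph{backwards} to an ancestor $\hat e$ of $e_i$, yet the closure stack is not rewritten, so the arguments destined for $f^X_j$ --- which by the configuration invariant sit on top of $\Gamma_i$, pushed in environment $e_i$ --- must be reconciled with the older environment $\hat e$ as $\psi$ begins to consume them. This is where the bookkeeping is tightest and should be set up with the greatest care; if the bare statements of parts~\ref{lem:env-run-well-1}--\ref{lem:env-run-well-3} turn out too coarse to survive this transition, the remedy is to strengthen the invariant, for example by also recording, for each closure currently on the stack, the environment under which it was pushed and tracking the moment $f^X_j$ is about to consume it, and then threading this through the remaining cases, which is mechanical. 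Part~\ref{lem:env-run-well-4}, by contrast, is essentially routine once the environment structure is pinned down: unfolding lengthens both the priority stack and the predecessor chain of the computed environment by one, ground-type substitution across a different environment shortens both by one, and every other transition leaves both unchanged.
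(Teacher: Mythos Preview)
Your approach is exactly the paper's: induction on the position in the run, followed by a case analysis on the shape of the current formula. Your treatment of the base case, the boolean/modal cases, application, fixpoint unfolding, and ground-type variables all line up with the paper's (your handling of the fixpoint case is in fact more explicit than the paper's one-paragraph sketch).

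You are right to single out the non-ground variable case, and your worry is not hypothetical: Item~\ref{lem:env-run-well-3} \emph{as stated} is not preserved there. Concretely, consider $\delta(X) = ((f\,\psi_1)\,\psi_2)$ with $f$ of type $\tau_1\to\tau_2\to\Prop$. After unfolding $X$ into a fresh environment $e_X$ you push $(\psi_2,e_X)$ and $(\psi_1,e_X)$, then read $f$ and jump to $e_X(f)=(\phi,\hat e)$ with $\hat e$ a strict predecessor of $e_X$. Now $e_{i+1}=\hat e$, but the stack still carries closures with environment $e_X$, which is a \emph{descendant} of $\hat e$, not $\hat e$ or a predecessor of it. The paper's ``$e_i$ switches to a predecessor and all items continue to hold'' glosses over this.

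The remedy, however, is to \emph{weaken} Item~\ref{lem:env-run-well-3}, not to strengthen the invariant as you suggest: replace ``$e_i$ or a predecessor of $e_i$'' by ``$e'_i$ or a predecessor of $e'_i$''. This version is trivially preserved by the non-ground variable step (since $e'_{i+1}=e'_i$ and the stack is untouched), is still established in the application step via Item~\ref{lem:env-run-well-1} (the pushed environment $e_i$ is $e'_i$ or a predecessor of it), and is exactly what the fixpoint case needs to derive Item~\ref{lem:env-run-well-2} (the new environment's parent is $e'_i$, so stack closures with environments $\leq e'_i$ become predecessors of it). No extra bookkeeping is required. Apart from this localized repair, your proof is complete and matches the paper's line for line.
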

\begin{proof}
The proof is by induction over the sequence of configurations. After the initial state is expanded to its transition relation, the lemma holds.  Item~\ref{lem:env-run-well-2} needs to be verified only on environment creation, Item~\ref{lem:env-run-well-1} only when the fixpoint currently being computed changes.

Now consider the form of the current closure $(\psi_i,e_i)$ and assume that the lemma holds so far. Clearly, for modal and boolean operators there is nothing to prove. 

If $\psi_i$ is of the form $(\psi'\, \psi'')$, 
then $\psi''$ is put on the stack and, by assumption, is either from $e_i$ or a predecessor environment, so again the new element conforms to Item~\ref{lem:env-run-well-3}. 

If $\psi_i$ is of the form $X$, then a new environment $e_{i+1}$ is created and will be the new environment currently being computed. Moreover, the parent environment of $e_{i+1}$ is $e_i$. This satisfies Item~\ref{lem:env-run-well-1}. Since 
all 
closures on the stack are from $e_i$ or from predecessors of $e_i$, the new environment satisfies Item~\ref{lem:env-run-well-2}. Since the stack is empty, it fulfills the stack requirements. Moreover, a new priority is added to the priority stack. Since it is tied to the new environment, Item~\ref{lem:env-run-well-4} continues to hold.

If $\psi_i$ is a variable not of ground type, $e_i$ switches to a predecessor and all items continue to hold.
If $\psi_i$ is  a variable of ground type, the stack is empty and, hence, Item~\ref{lem:env-run-well-3} is satisfied. There are two cases: Either $e_i(x) = (Q', e')$ with $e' = e'_i$, or, by Item~\ref{lem:env-run-well-2}, $e'$ is a predecessor of $e_i$ and, by Item~\ref{lem:env-run-well-1}, of $e'_i$. In the first case, the next closure will be $(Q',e')$ computed in $e'$, and there is nothing left to prove. In the second case, the automaton transitions towards $(t_i,(\psi_i,e_i), e''_i, \Gamma_i, \Delta'_i)$, where $e''_i$ is the parent of $e'_i$ and $ \Delta'_i$ is $\Delta_i$ with the top priority removed. Hence, Item~\ref{lem:env-run-well-4} is satisfied. Since $e'$ is a predecessor of $e'_i$, it is either equal to $e''_i$ or a predecessor of $e''_i$, so Item~\ref{lem:env-run-well-1} is also satisfied.
\end{proof}
From the definition of the transition relation, we can deduce that the environment which is currently being computed changes in two ways: By entering a new environment from its parent, which corresponds to environment creation, or by returning to the parent environment from an immediate successor environment. This means that, once an environment is left in favor of the parent environment, it will neve be returned to and the computation of its fixpoint is finished. Moreover, closures with this environment also never appear again. Hence, if such an environment is permanently left, we say that it is $\emph{being closed}$. Formally, a closed environment is one such that a variable of ground type from this environment has been read or, equivalently, the automaton has reached a configuration $(t,(Q,e), e', \Gamma, \Delta)$ such that $e'$ is the parent of the environment in question. Note that an environment is closed if and only if the corresponding priority has been removed from the priority stack. 

\begin{lemma}
\label{lem-decreasing-subexpr}
Let $e$ be an environment, let $(\psi,e)$ be a closure of ground type for some configuration and let $e$ be the environment currently being computed. 
 As long as $e$ stays the environment currently being computed, the type order of the current closure never properly decreases. 
 If the computation changes from $e$ to a proper successor and later returns to $e$ for the next time, this happens in a ground-type proper subexpression of $\psi$.
\end{lemma}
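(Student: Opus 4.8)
The plan is to prove both parts by induction along the run, distinguishing cases on the shape of the current formula and hence on the applicable transition. I use throughout Lemma~\ref{lem:env-run-well} and the observation preceding this lemma that the environment currently being computed changes only in two ways: it descends to a freshly created child when a fixpoint variable is read, or it ascends to the parent when a ground-type variable is resolved whose binding lies outside the computed environment, which at the same time closes the environment left behind.

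For the first part I would isolate which transitions can occur while $e$ stays the environment being computed. Reading a fixpoint variable $X$ replaces the current closure, which then has order at least $1$, by $\delta(X)$ of order $0$, but it also creates a child of $e$ and makes that child the computed environment; resolving a ground-type variable whose binding points outside the computed environment makes the parent the computed environment and closes $e$. Neither keeps $e$ as the computed environment. Every other transition — the modal and boolean steps, the step from $(\varphi_1\,\varphi_2)$ to $\varphi_1$, the lookup of a non-ground variable, and the resume step for a ground-type variable whose binding lies inside the computed environment — leaves the computed environment fixed, and for none of them does the type order of the current closure drop: the modal, boolean and resume steps act at ground type and stay there; a variable lookup installs a closure of the same type; and the step from $(\varphi_1\,\varphi_2)$ of type $\tau'$ to $\varphi_1$ of type $\tau\to\tau'$ can only raise the order, since $\operatorname{ord}(\tau\to\tau')\geq\operatorname{ord}(\tau')$ by the definition of $\operatorname{ord}$. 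This is the first part.

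For the second part I would carry along, from the given configuration onward, the invariant: (i) whenever the current closure has environment exactly $e$, its formula is a subexpression of $\psi$; and (ii) every closure with environment exactly $e$ that is ever pushed onto the stack carries a \emph{proper} subexpression of $\psi$. It is enough to maintain this from the given configuration on: since $e$ is being computed there, every child of $e$ created earlier has already been returned from, hence closed, and is never revisited, so only closures arising afterwards, which stem from $\psi$, can still matter. Maintenance is a transition-by-transition check. A closure is pushed only by the step from $(\varphi_1\,\varphi_2)$ to $\varphi_1$, which pushes $\varphi_2$ with the environment of the current closure, so a pushed closure has environment $e$ only when the current closure $(\varphi_1\,\varphi_2)$ did, and then $(\varphi_1\,\varphi_2)$ is a subexpression of $\psi$ by (i) and $\varphi_2$ a proper one. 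The current closure acquires environment $e$ only by looking up a variable — ground or not — which installs the closure the variable is bound to; that closure was popped off the stack when its fixpoint was read, so if it has environment $e$ it carries a subexpression of $\psi$ by (ii). The remaining steps keep the environment and move within subexpressions. Finally, if the computation leaves $e$ for a child and later returns to $e$, the first subsequent configuration whose current closure again has environment $e$ with $e$ being computed is produced by a resume step installing some $(\chi, e)$ (or, in a degenerate case, by the return step leaving a ground-type variable bound in $e$ as current closure); by the invariant its formula is a proper subexpression of $\psi$, and it has ground type because the variable resolved at that step does. That is the claim.

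The delicate point is keeping the invariant airtight across the higher-order detours. Resolving a non-ground variable makes the automaton evaluate an argument term that in general is not a subexpression of $\psi$; one must argue that such a term carries environment $e$ only when, by (ii) applied to the moment it was on the stack, it actually is a subexpression of $\psi$, and that otherwise it sits in an environment other than $e$, so that the invariant stays vacuously true for the new configuration and nothing carrying a foreign term is ever pushed with environment $e$. Dually, when the computation climbs back out of a descendant of $e$, one must check that the climb is driven precisely by the stored argument closures — which, by Lemma~\ref{lem:env-run-well}, live in the currently computed environment or in its predecessors — so that a climb which really lands back in $e$, rather than overshooting and closing $e$ for good, ends at a resume step into $e$, hence at a proper subexpression of $\psi$. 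The parts of Lemma~\ref{lem:env-run-well} locating stack closures in the current or predecessor environments and variable bindings in predecessor environments are what make this case analysis close.
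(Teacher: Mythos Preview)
Your proof is correct and follows essentially the same approach as the paper's: both argue Part 1 by observing that the only order-decreasing transition is reading a fixpoint variable, which changes the computed environment, and both argue Part 2 by using that the stack is empty at $(\psi,e)$ so that the ground-type binding in the child through which one returns to $e$ must have been pushed afterwards and hence is a proper subexpression of $\psi$. Your invariants (i) and (ii) make explicit what the paper's terse ``hence, it must be a proper subexpression of $\psi$'' leaves implicit---in particular, you handle the possibility that the closure environment drifts to a predecessor of $e$ via a non-ground lookup while $e$ is still being computed, and you explain why only closures with environment exactly $e$ matter for the return; the paper's argument tacitly assumes these points. The parenthetical ``degenerate case'' you mention is harmless but can be dropped: if the closure $(f,e_c)$ at the moment the computed environment becomes $e$ had $e_c=e$, the binding would lie strictly above $e$ and the pop would continue, so one never actually resumes in $e$ that way.
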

\begin{proof}
By the definition of the transition relation. The only transition that decreases the type order of the current closure is reading a fixpoint variable, which will change the environment currently being computed. Since $(\psi, e)$ is of ground type, the stack must be empty. If the computation leaves $e$ for a proper successor, this is through creation of a new environment or, equivalently, through reading a fixpoint variable. If the new environment binds a variable of ground type, the closure this variable points to must have been put on the stack between reading $(\psi,e)$ and the environment's creation. Hence, it must be a proper subexpression of $\psi$. If the new environment does not bind a variable of ground type, the computation can not return to $e$.
\end{proof}
 Since $\delta(X)$ for each $X$ has a finite syntax tree, repeated application of the previous lemma yields that the computation changes to any environment only a finite number of times. Otherwise we would obtain an infinitely descending sequence of subformulae of $\delta(X)$ where each subformula is an operand-type strict subformula of the previous.
 
It follows that each environment is either eventually closed, or eventually left permanently. Each environment appears as the environment currently being computed only finitely often. Moreover, each environment can only have finitely many direct successors because creation of a succesor of $e$ during a configuration requires the previous configuration to be in $e$. This means that, during an infinite run, infinitely many environments will not be closed and the corresponding priorities will never be popped from the priority stack. We define that \verifier wins the acceptance game if the highest priority occurs infinitely often but is never popped from the stack is even.

More formally, consider a run $(C_i)_{i \in \mathbb{N}}$. Consider the subsequence of configurations $(C'_j)_{j \in J}$ such that $C'_j = (t_{j}, (\delta(X),e_{j}), e_{j}, \epsilon,  \Delta_j)$, i.e., a configuration such that $e_j$ was created in this configuration, but such that there is no $i>j$ with a configuration $C_i = (t_{i}, (x,e_{j}), e_{j}, \Gamma_i,  \Delta_i)$ with $x$ of ground type, i.e., $e_j$ is never closed. By the above considerations, $J$ must be infinite. Then for all $n \geq j$, the priority stack $\Delta_j$ will be an initial segment of $\Delta_n$. In particular, this holds for all $n \in J$. Hence, the set $(\Delta_j)_{j \in J}$ is is such that $\Delta_j$ is a prefix of $\Delta_{j'}$ if $j \leq j'$. We define that a play is accepting if the highest priority that occurs in the limit of this prefix-ordered chain is even. We say that an automaton $\mathcal{A}$ accepts a tree $\mathcal{T}$, and write $\mathcal{T} \models \mathcal{A}$, if and only if \verifier has a strategy such that the acceptance game generates an accepting run. Note that the above constitutes a \emph{stair parity condition} in the sense that only those priorities contribute to the winning condition that are never removed from the priority stack. 

Note that this is not the same as just taking the sequence of priorities occurring during the run: It is possible that a high priority occurs infinitely often during the run, but each occurrence is eventually removed from the priority stack. This occurs in Example~\ref{example-inf-cont} where priority $1$ occurs infinitely often, but is always removed again from the priority stack a few configurations later.
\begin{definition}
Two \aapkm are  \emph{equivalent} if and only if they accept the same trees.
\end{definition}

\begin{observation}
For each \aapkm $\mathcal{A}$ there is an \aapkm $\overline{\mathcal{A}}$ over the same set of propositions such that for all trees, we have $\mathcal{T} \models \mathcal{A}$ if and only if $\mathcal{T} \not \models \overline{\mathcal{A}}$.
\end{observation}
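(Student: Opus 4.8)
The plan is to let $\overline{\mathcal{A}}$ be the \emph{De Morgan dual} of $\mathcal{A}$: keep $\mathcal{X}$, the lambda-variable sets $\mathcal{F}_{X_i}$, the types $(\tau_X)_{X\in\mathcal{X}}$ and the initial state $X_{\text{init}}$ unchanged; set $\overline{\delta}(X) = \overline{\delta(X)}$, where $\overline{\,\cdot\,}$ is the syntactic dualization defined by recursion on the grammar via $\overline{P} = \neg P$, $\overline{\neg P} = P$, $\overline{\varphi_1\vee\varphi_2} = \overline{\varphi_1}\wedge\overline{\varphi_2}$, $\overline{\varphi_1\wedge\varphi_2} = \overline{\varphi_1}\vee\overline{\varphi_2}$, $\overline{\Diamond\varphi} = \Box\overline{\varphi}$, $\overline{\Box\varphi} = \Diamond\overline{\varphi}$, $\overline{f} = f$, $\overline{X} = X$, $\overline{(\varphi_1\,\varphi_2)} = (\overline{\varphi_1}\,\overline{\varphi_2})$; and take $\overline{\Lambda}$ to be $\Lambda$ with every priority shifted by $\pm 1$, so that it lands in the \emph{other} of the two admissible codomains $\{1,\dots,m\}$ / $\{0,\dots,m-1\}$. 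Thus $\overline{\mathcal{A}}$ still has index $m$, but the parity of every priority is flipped. I would first check that $\overline{\mathcal{A}}$ is a well-formed \aapkm of the same order: $\overline{\,\cdot\,}$ fixes the variables $f$ and $X$ and is closed under the $i$-th grammar, so $\overline{\delta}(X)$ is again a legal right-hand side; it permutes only ground-type constructs and dualizes application argument-wise, hence preserves types and $\emptyset\vdash\overline{\delta}(X)\colon\Prop$; and $\tau_{X_{\text{init}}}$ is still $\Prop$.

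Next I would set up a bijection between plays of the acceptance game of $\mathcal{A}$ over $\mathcal{T},t_0$ and plays of the acceptance game of $\overline{\mathcal{A}}$ over $\mathcal{T},t_0$. It sends a configuration $(t,(Q,e),e',\Gamma,\Delta)$ to $(t,(\overline{Q},\overline{e}),\overline{e'},\overline{\Gamma},\Delta')$, where $\overline{\,\cdot\,}$ is applied pointwise to the formula components of the environments and of the stack of closures, and $\Delta'$ is $\Delta$ with every entry shifted by $\pm 1$. Since dualization commutes with taking subformulae and does not touch the tree component, the stack of closures, or the environment/predecessor structure, this map is a bijection, it sends the initial configuration to the initial configuration, and it sends each transition step of $\mathcal{A}$ to the corresponding transition step of $\overline{\mathcal{A}}$, with the roles of \verifier and \spoiler interchanged exactly at the choice points ($\vee$ vs.\ $\wedge$, $\Diamond$ vs.\ $\Box$) while every deterministic step (variable lookup, application, fixpoint unfolding, stack pop, priority pop) stays deterministic and manipulates the two stacks identically. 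By Lemma~\ref{lem:env-run-well} the popping behaviour is governed solely by this unchanged structure, so an environment is closed, and its priority popped, in a play $\pi$ of $\mathcal{A}$ exactly when the same happens in its image $\overline{\pi}$.

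I would then verify that $\overline{\pi}$ is won by the opposite player. For a finite play this is the leaf case: $\pi$ ends in $(t,(Q,e),\dots)$ with $Q\in\{P,\neg P\}$ and \verifier wins iff $\mathcal{T},t\models Q$, while $\overline{\pi}$ ends in $(t,(\overline{Q},\dots),\dots)$ and, as $\mathcal{T},t\models\overline{Q}$ iff $\mathcal{T},t\not\models Q$, \verifier wins $\overline{\pi}$ iff \spoiler wins $\pi$ (a player stuck at a modality also changes, since the branching is swapped). For an infinite play, by the previous paragraph the same prefix-ordered chain of never-popped priority stacks arises on both sides, except that each priority is shifted by $\pm 1$; hence the highest priority in its limit is even for $\pi$ iff it is odd for $\overline{\pi}$, so $\pi$ is accepting for $\mathcal{A}$ iff $\overline{\pi}$ is \spoiler-winning for $\overline{\mathcal{A}}$. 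Transporting strategies along the bijection, \verifier has a winning strategy in the game of $\mathcal{A}$ over $\mathcal{T}$ if and only if \spoiler has a winning strategy in the game of $\overline{\mathcal{A}}$ over $\mathcal{T}$.

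Finally I would invoke determinacy of the acceptance game: its arena is countable and its winning condition — the clopen finite-play outcomes together with ``the limit of the prefix chain of never-popped priority stacks has even maximum'' — is Borel, so by Borel determinacy exactly one player has a winning strategy (equivalently, the stair-parity acceptance game reduces to an ordinary parity game on a countable arena, which is determined). Hence \spoiler wins the game of $\overline{\mathcal{A}}$ over $\mathcal{T}$ iff \verifier does not, so $\mathcal{T}\models\mathcal{A}$ iff $\mathcal{T}\not\models\overline{\mathcal{A}}$. The main obstacle is not conceptual but bookkeeping: making fully rigorous that $\overline{(\varphi_1\,\varphi_2)} = (\overline{\varphi_1}\,\overline{\varphi_2})$ really is the dual (this rests on the intended semantics being \emph{monotone} functions and must be pushed through the Krivine presentation) and that, as noted, the popping behaviour is invariant under $\overline{\,\cdot\,}$; pinning down the precise determinacy statement used for the stair-parity condition is a minor further point.
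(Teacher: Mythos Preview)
Your construction and argument match the paper's: it too obtains $\overline{\mathcal{A}}$ by dualizing the modal and boolean operators and shifting each priority by one, and then argues by transferring winning strategies between the two acceptance games with the players' roles swapped. You are more explicit than the paper's two-sentence sketch about the need for determinacy of the acceptance game, but this is elaboration of the same route rather than a different one.
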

The desired automaton is obtained by increasing the priorities of each state by one and replacing modal and boolean operators by their duals. A proof by induction over the structure of the acceptance game shows that a winning strategy for \verifier in the game for one automaton yields a winning strategy for \spoiler in the other, and vice versa.

\section{\apkm and \hlmu}
\label{sec:hfl}

\subsection{Syntax of \hlmu}

In addition to the set $\mathcal{P}$ of atomic propositions, fix infinite sets of variables $\mathcal{V}$ disjoint from $\mathcal{F}$ and $\mathcal{Y}$ disjoint from $\mathcal{X}$ that denote variables bound by a $\lambda$-expression, respectively a fixpoint quantifier. Separating $\mathcal{V}$ and $\mathcal{Y}$ is usually not done for \hlmu, but facilitates technical exposition. Lower case letters $x,y,\dotsc$ denote variables in $\mathcal{V}$, upper case letters $X,Y,\dotsc$ those in $\mathcal{Y}$.  

\hlmu-formulae $\varphi$ are defined by the grammar
\begin{align*}
   \varphi  & \Coloneqq    P \mid \neg P\mid \Diamond \varphi \mid \Box \varphi \mid \varphi \vee \varphi \mid \varphi \wedge \varphi \mid x \mid X \mid \lambda (x\colon\tau). \varphi \mid (\varphi\,\varphi)  \mid \mu (X\colon\tau).\varphi \mid \nu (X\colon\tau).\varphi 
\end{align*}
where $P \in \mathcal{P}$, $x \in \mathcal{V}$ and $X \in \mathcal{Y}$ and $\tau$ is a simple type. Note that negation is not present explicitly  in the logic since it can be eliminated \cite{DBLP:journals/corr/Lozes15}.

The binder $\lambda (x^v\colon\tau).\varphi$ binds $x$ in $\varphi$, the binder $\sigma (X\colon\tau).\varphi$ with $\sigma \in  \{\mu, \nu\}$ binds $X$ in $\varphi$. Let $\operatorname{sub}(\varphi)$ be the set of subformulae of $\varphi$. An \hlmu-formula is \emph{well-named} if there is, for each $X \in \mathcal{Y}$, at most one subformula of the form $\sigma (X\colon\tau).\psi$ and, for each $x \in \mathcal{V}$, at most one subformula of the form $\lambda (x\colon\tau).\psi$.

A variable from $\mathcal{V}$ or $\mathcal{Y}$ in a formula $\varphi$ is \emph{bound} if it is bound by a binder of the respective type, and \emph{free} otherwise. A formula is called \emph{closed} if it has no free variables and \emph{open} otherwise.
For a well-named formula $\varphi$ and $X \in \mathcal{Y} \cap \operatorname{sub}(\varphi)$, define $\operatorname{fp_\varphi}(X)$ as the unique subformula $\psi$ of $\varphi$ such that $\psi = \sigma X.\psi'$ for $\sigma  \in \{\mu,\nu\}$. We have a partial order $<_{\operatorname{fp_\varphi}}$ on the fixpoint variables of  $\varphi$ via $X <_{\operatorname{fp_\varphi}} Y$ if $Y$ appears freely in $\operatorname{fp_\varphi}(X)$. We say that $Y$ is \emph{outermore} than $X$. A variable is outermost among a set of variables if it is maximal in this set with respect to $<_{\operatorname{fp_\varphi}}$.

We say that $\varphi$ has type $\tau$ in a \emph{context} $\Sigma$ if $\Sigma \vdash \varphi \colon \tau$ can be derived via the typing rules in in Figures~\ref{aapka-typing-rules} and \ref{hlmu-typing-rules}. Note that the rules concerning variables from $\mathcal{X}$ and $\mathcal{F}$ are not used. If $\Sigma \vdash \varphi \colon \tau$ for some $\Sigma$ and $\tau$ then $\varphi$ is \emph{well-typed}. A closed formula is well typed if $\emptyset\vdash \varphi\colon\tau$.  Typing judgments are unique if formulae are annotated with the correct types \cite{viswaviswa}. We usually omit the type annotations and tacitly assume that all formulae are well-typed and that the type of a formula can be derived from context.

\begin{figure}
\caption{Additional Typing Rules for \hlmu.}
\label{hlmu-typing-rules}
\begin{mathpar}
\inferrule{ }{\Sigma, x \colon\tau \vdash x\colon\tau} \and
\inferrule{ }{\Sigma, X \colon\tau \vdash X\colon\tau} \and
\inferrule{\Sigma,x\colon\tau\vdash \varphi\colon\tau'}{\Sigma\vdash \lambda (x \colon \tau).\varphi\colon\tau\rightarrow\tau'} \and
\inferrule{\Sigma,X \colon \tau \vdash \varphi\colon\tau}{\Sigma \vdash \sigma (X
\colon\tau). \varphi\colon\tau} 
\end{mathpar}
\end{figure}

\subsection{Semantics of \hlmu}

Fix a tree $\mathcal{T}$. The semantics of types are partially ordered sets defined inductively via
 $\llbracket\Prop\rrbracket = (2^T,\subseteq)$ and
	 $\llbracket \tau \rightarrow \tau' \rrbracket = (\llbracket\tau'\rrbracket^{(\llbracket\tau\rrbracket)},\sqsubseteq_{\tau \rightarrow\tau'}),$
where $T =  \{0,1\}^*$ and $\llbracket\tau'\rrbracket^{(\llbracket\tau\rrbracket)}$ is the set of monotone  functions from $\llbracket\tau\rrbracket$ to $\llbracket\tau'\rrbracket$. Define the partial order $\sqsubseteq_{\tau \rightarrow\tau'}$ via pointwise comparison: For $f,g \in \llbracket\tau'\rrbracket^{(\llbracket\tau\rrbracket)}$ let $f \sqsubseteq_{\tau \rightarrow\tau'} g$ if and only if $f(x) \sqsubseteq_{\tau'} g(x)$ for all $x \in {\llbracket\tau\rrbracket}$.

Note that $\llbracket\Prop\rrbracket$ is a boolean algebra and, hence, also a complete lattice. This makes $\llbracket \tau \rightarrow \tau' \rrbracket$ also a complete lattice for all $\tau,\tau'$. Let $\bigsqcup_{\tau} M$ and $\bigsqcap_{\tau} M$ denote the join and meet, respectively, of the set $M \subseteq \llbracket\tau\rrbracket$, and let $\top_{\tau}$ and $\bot_\tau$ denote the maximal and minimal elements of  $\llbracket\tau\rrbracket$. 
%

Let $\Sigma = X_1 \colon\tau_1, \dotsc, X_n\colon\tau_n, x_1\colon \tau'_1,\dotsc,x_m\colon\tau'_m$ be a context. An interpretation $\eta$ is a partial map from the sets of variables $\mathcal{V}$ and $\mathcal{Y}$ such that $\eta(X_i) \in \llbracket\tau_i\rrbracket$ for all $i \leq n$ and $\eta(x_j) \in \llbracket\tau'_j\rrbracket$ for all $j\leq m$. Then $\eta[X \mapsto f]$ is the interpretation that maps $X$ to $f$ and agrees with $\eta$ otherwise, similar for $\eta[x \rightarrow f]$.

We define the semantics of \hlmu\ over \Transsys inductively as in Figure~\ref{hlmu-semantics} (with dual cases left out for space considerations).
\begin{figure}
\caption{Semantics of \hlmu.}
\label{hlmu-semantics}
\begin{align*}
	 \llbracket \Sigma\vdash P\colon\Prop\rrbracket_\eta                        & = P^\Transsys \\
	 \llbracket \Sigma\vdash\varphi \vee \psi\colon\Prop\rrbracket_\eta   & = \llbracket\Sigma\vdash\varphi\colon\Prop \rrbracket_\eta\cup \llbracket\Sigma\vdash\psi \colon \Prop \rrbracket_\eta\\
	\llbracket \Sigma\vdash \Diamond \varphi \colon\Prop\rrbracket_\eta &  = \big\{t \in T\colon\ t0  \in \llbracket\Sigma\vdash\varphi\colon\Prop\rrbracket_\eta \text{ or } t1  \in \llbracket\Sigma\vdash\varphi\colon\Prop\rrbracket_\eta  \big\}\\
	\begin{split} \llbracket \Sigma\vdash \lambda (x\colon\tau) . \varphi: \tau \rightarrow \tau' \rrbracket_\eta  & =  \{ f \in \llbracket \tau \to \tau'\rrbracket \colon  \forall y\in\llbracket\tau\rrbracket. \\ & \qquad \qquad f(y) =  \llbracket\Sigma,x\colon\tau \vdash \varphi:\tau' \rrbracket_{\eta[x\mapsto y]}\}\end{split}\\
	\llbracket \Sigma\vdash X\colon\tau\rrbracket_\eta & = \eta(X) \\
	\llbracket \Sigma\vdash x\colon\tau\rrbracket_\eta & = \eta(x)\\
	\llbracket \Sigma\vdash \mu (X\colon\tau) .\varphi\colon\tau\rrbracket_\eta & = \bigsqcap \big\{ d \in  \llbracket\tau\rrbracket \colon \llbracket\Sigma,(X\colon\tau) \vdash \varphi\colon\tau\rrbracket_{\eta[X \mapsto d]} \sqsubseteq_\tau d\big\}\\
	\llbracket \Sigma\vdash (\varphi\,\psi)\colon\tau'\rrbracket_\eta & = \llbracket\Sigma \vdash \varphi\colon\tau\rightarrow \tau'\rrbracket_\eta (\llbracket\Sigma\vdash\psi\colon\tau\rrbracket_\eta) 
\end{align*}
\end{figure}
For well-typed, well-named $\varphi$, we write $\mathcal{T},t\models_\eta \varphi$ if $s\in \llbracket\emptyset~\vdash~\varphi:\Prop\rrbracket_\eta$. We write $\mathcal{T},t \models \varphi$ if $\varphi$ is closed and $\eta$ is the empty interpretation.
Two formulae are \emph{equivalent}, written $\varphi \equiv \varphi'$, if  $\llbracket \Sigma\vdash \varphi \rrbracket_\eta = \llbracket \Sigma \vdash \varphi'\rrbracket_\eta$ for all $\eta$, $\Sigma$.

\subsection{Translations between \hlmu and \apkm}

\begin{lemma}
\label{lem:embedding-apkm}
Let $\varphi$ be an \hlmu-formula of order at most $k$. Then there is an \apkm $\mathcal{A}_\varphi$ of order at most $k$ such that, for all trees $\mathcal{T},t$, we have $\mathcal{T},t \models \varphi$ if and only if $\mathcal{A}_\varphi$ accepts $\mathcal{T},t$.
\end{lemma}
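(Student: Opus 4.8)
The plan is to give a structure-directed translation of the formula $\varphi$ into an \apkm $\mathcal{A}_\varphi$ that simulates the standard fixpoint-unfolding semantics of \hlmu step by step. Since the excerpt already establishes the operational/acceptance machinery of \apkm, the real content is to match up the three ingredients on the automaton side---states $\mathcal{X}$, the priority labeling $\Lambda$, and the transition relation $\delta$---with the fixpoint subformulae, the fixpoint alternation depth, and the body of each fixpoint, respectively.

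First I would preprocess $\varphi$ so that it is well-named and in negation normal form (the excerpt notes negation is eliminable, pushing $\neg$ to the atoms), and so that every fixpoint binder $\sigma(X\colon\tau_X).\psi_X$ has its body $\psi_X$ in the "$\lambda$-stripped" shape forced by the \apkm transition rules: the \apkm convention is that each $\delta(X)$ implicitly abstracts over the whole remaining stack via a string of $\lambda$'s, so I would rewrite $\mu(X\colon\tau_1\to\dots\to\tau_{n_X}\to\Prop).\psi$ as binding $X$ whose body, after the implicit abstraction over fresh variables $f^X_1,\dots,f^X_{n_X}$, is $\psi$ with the leading $\lambda$'s of $\psi$ consumed. (If $\psi$ does not syntactically begin with enough $\lambda$'s one can $\eta$-expand; this is the only mildly fiddly normalization step.) Then I set $\mathcal{X}$ to be the set of fixpoint variables of $\varphi$ (plus, if needed, a fresh initial variable $X_{\text{init}}$ of type $\Prop$ with $\delta(X_{\text{init}}) = \varphi$ when $\varphi$ is not itself a single fixpoint), $\tau_X$ the annotated type of each fixpoint (so the order bound $k$ is inherited verbatim), and $\mathcal{F}_{X} = \{f^X_1,\dots,f^X_{n_X}\}$ the implicitly abstracted variables of the $X$-body. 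The transition relation $\delta$ then maps each $X$ to the $\lambda$-stripped body, with \hlmu-variables $x\in\mathcal{V}$ occurring inside $\mathrm{fp}_\varphi(X)$ renamed to the corresponding $f^{X'}_j\in\mathcal{F}$ of whichever fixpoint body introduced them; one checks directly against Figure~\ref{aapka-typing-rules} that $\emptyset\vdash\delta(X)\colon\Prop$ because $\varphi$ was well-typed. For the priority labeling $\Lambda$, I would compute the fixpoint-alternation depth using $<_{\mathrm{fp}_\varphi}$: assign to each $X$ a priority of the right parity ($\nu$-variables even, $\mu$-variables odd) so that if $X <_{\mathrm{fp}_\varphi} Y$ then $\Lambda(X)\le\Lambda(Y)$, with strict inequality exactly when $X,Y$ have different fixpoint type---this is the usual way alternation depth is turned into a parity index, and it makes the number of priorities of $\mathcal{A}_\varphi$ equal to the alternation depth of $\varphi$.

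Next I would prove correctness: for every tree $\mathcal{T}$ and subtree $t$, $\mathcal{T},t\models\varphi$ iff \verifier\ wins the acceptance game of $\mathcal{A}_\varphi$ from the initial configuration $(t,(X_{\text{init}},e_0),e_0,\epsilon,\epsilon)$. The natural route is to relate the acceptance game to the standard fixpoint (parity) game for \hlmu, i.e. the unfolding game where a play descends through the syntax tree, applies arguments, unfolds fixpoints on the fly, and the winner is decided by the outermost fixpoint unfolded infinitely often. I would set up a bisimulation-style correspondence between positions of that game and configurations of $\mathcal{A}_\varphi$: the environment stack of the Krivine machine records exactly the closures bound to the $\lambda$-variables (matching $\beta$-reduction / argument passing in the unfolding game), the closure stack $\Gamma$ records pending arguments, and---crucially---the priority stack $\Delta$ together with the stair-parity condition records precisely which fixpoint is "currently being computed" (Definition~\ref{def:compute}) and which fixpoint is re-entered infinitely often without being closed. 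Using Lemma~\ref{lem:env-run-well} (well-formedness of environments and the fact that the priority stack is exactly the chain of priorities tied to the currently-computed environment and its predecessors) and the observation that a fixpoint variable is "re-unfolded" in the \hlmu-game exactly when the corresponding \apkm-environment is re-entered without being closed, one shows the highest priority seen infinitely often in the limit prefix of the $(\Delta_j)_{j\in J}$ chain is the $\Lambda$-image of the outermost infinitely-unfolded fixpoint. Hence the stair-parity winner coincides with the \hlmu parity-game winner, and \verifier\ strategies transfer back and forth; combined with the known adequacy of the \hlmu fixpoint game with respect to the denotational semantics of Figure~\ref{hlmu-semantics}, this gives $\mathcal{T},t\models\varphi$ iff $\mathcal{A}_\varphi$ accepts $\mathcal{T},t$.

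The main obstacle I expect is precisely this last correspondence, specifically showing that the \emph{stair} parity condition on the priority stack---rather than the raw sequence of priorities encountered---captures the correct \hlmu winner. Priorities of fixpoints whose computation has finished get popped, and one must argue that these "popped" environments are irrelevant (which the excerpt's discussion around Lemma~\ref{lem-decreasing-subexpr} supports: an environment, once left for its parent, is never returned to, and each environment is computed only finitely often), so that the genuine infinitely-recurring fixpoint is the one whose priority survives on the stack forever. Getting the bookkeeping of higher-order arguments right here is delicate because a fixpoint of function type can be partially applied, left, and re-entered in a proper subexpression of its body; Lemma~\ref{lem-decreasing-subexpr} is exactly the tool that controls this and guarantees the recursion on subexpressions is well-founded, so it should be marshalled carefully. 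Everything else---the typing check against Figure~\ref{aapka-typing-rules}, the order bound, and the priority count---is routine bookkeeping by induction on the structure of $\varphi$.
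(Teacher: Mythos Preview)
Your correctness argument via a play-by-play bisimulation with the \hlmu unfolding game is sound in spirit and in fact more detailed than the paper's (which simply asserts that the constructed automaton accepts the same trees). The construction itself, however, has two genuine gaps, both stemming from restrictions in the \apkm transition grammar that your preprocessing does not address.

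First, the \apkm grammar for $\varphi_i$ contains no explicit $\lambda$; abstraction happens only implicitly, and only at the moment a fixpoint state $X$ is entered. Your preprocessing handles the case where a fixpoint body fails to \emph{start} with enough abstractions (via $\eta$-expansion), but it does not handle $\lambda$'s that occur in operand position or otherwise not immediately beneath a fixpoint binder, e.g.\ $(\psi\,(\lambda x.\chi))$. The paper fixes this by wrapping every such stray abstraction in a vacuous fixpoint of matching type, $\lambda f.\psi \rightsquigarrow \sigma X.\lambda f.\psi$, before anything else; without this step you simply cannot express the subformula in the target syntax.

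Second, and more seriously, your variable-renaming step is ill-formed. You propose to rename a free $\lambda$-variable $x$ inside the body of $X$ to ``the corresponding $f^{X'}_j$ of whichever fixpoint body introduced them''. But the grammar for $\delta(X_i)$ restricts $f$ to $\mathcal{F}_{X_i}$: the body of $X_i$ may reference \emph{only} its own implicitly-bound parameters, not those of an enclosing $X'$. So a fixpoint $\sigma X.\psi$ whose body $\psi$ contains a free $\lambda$-variable bound further out cannot be translated by renaming alone. The paper's remedy is a lambda-lifting pass: replace $\sigma X.\psi$ by $((\sigma X.\lambda f'.\psi[f'/f])\,f)$, processed top-down, so that by the time one reaches $X$ the offending variable has become one of $X$'s own parameters. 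This step is essential and is missing from your pipeline; once you add it (together with the vacuous-fixpoint padding and the $\eta$-long normalisation, which the paper also performs), the rest of your argument goes through.
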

\begin{proof}{(Sketch)}
For space considerations, we only give a sketch of the proof. Let $\varphi$ be a \hlmu-formula. 

Since lambda abstraction is implicit for \apkm and can only occur directly after a fixpoint, occurrences of lambda abstraction $\lambda f.\psi$ in $\varphi$ that are not of the form $\sigma X.\lambda f_1.\dotsc \lambda f_n.\psi$ need to be padded by vacuous fixpoints. If $f$ is of type $\tau_1$ and $\psi$ is of type $\tau_2$, replace $\lambda f.\psi$ by $\sigma X.\lambda f.\psi$, where $X$ is of type $\tau_1\to\tau_2$ and $\sigma$ is chosen as convenient. 

Next, free lambda variables are removed. For a subformula $\sigma X.\psi$ that contains a free variable $f$ that is not a fixpoint, replace $\sigma X.\psi$ by $((\sigma X. \lambda f'. \psi[f'/f])\,f)$ where $f'$ is of the same type as $f$. This is organized such that fixpoints are translated before fixpoints in their subformulae, i.e., from top to bottom.

In a third step, any fixpoint of the form $\sigma X. \lambda f_1.\dotsc \lambda f_n.\psi$ with $\psi$ of type $\tau_1\to\dotsb\to\tau_m\to\Prop$ is changed to its $\eta$-long form, i.e., to $\sigma X.\lambda f_1.\dotsc\lambda f_n.\lambda g_1.\dotsc \lambda g_m. \psi'$ with $\psi' =  ((\psi\,\tau_m)\dotsb)\,\tau_1)$. 

It is not hard to verify that neither of these steps changes semantics of the formula in question. Let $\varphi'$ be the resulting \hlmu-formula and let $\mathcal{X}$ be the collection of fixpoint variables in $\varphi'$. Without loss of generality, $\varphi'$ has the form $\sigma X_{\text{init}}.\varphi''$ for some $\sigma$. 

For each fixpoint $X$ with defining formula $\sigma X.\lambda f_1.\dotsc\lambda f_n.\psi$ set $\delta(X)$ to $\psi$ where all occurrences of formula of the form $\sigma' X'.\psi'$ are replaced by $X'$ and set $\tau_X$ as the type of $X$. Then the automaton
$
\mathcal{A} = (\mathcal{X}, \Lambda, X_{\text{init}}, \delta, (\tau_X)_{X \in \mathcal{X}})
$
with $\Lambda$ chosen such that each fixpoint is labeled odd or even depending on parity, but not lower than any fixpoint in a subformula, is an \apkm accepting the same trees as $\varphi$. 
\end{proof}

\begin{lemma}
\label{lem:embedding-hfl}
Let $\mathcal{A}$ be an \aapkm of order at most $k$. Then there is an \hlmu-formula $\varphi_{\mathcal{A}}$ of order at most $k$ such that, for all trees $\mathcal{T},t$, we have $\mathcal{T},t \models \varphi_{\mathcal{A}}$ if and only if $\mathcal{A}$ accepts $\mathcal{T},t$.
\end{lemma}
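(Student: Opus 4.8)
The plan is to mirror the construction in the proof of Lemma~\ref{lem:embedding-apkm} in reverse: read off $\mathcal{A}$ a hierarchical system of fixpoint equations over \hlmu, and then collapse that system into a single closed formula. Concretely, fix $\mathcal{A} = (\mathcal{X},\Lambda,X_{\text{init}},\delta,(\tau_X)_{X\in\mathcal{X}})$ with $\mathcal{X} = \{X_1,\dots,X_n\}$ and $\tau_{X_i} = \tau^{X_i}_1\to\dots\to\tau^{X_i}_{n_i}\to\Prop$. For each $i$ pick fresh \hlmu-variables $y^{X_i}_1,\dots,y^{X_i}_{n_i}$ and reuse $X_1,\dots,X_n$ as fixpoint variables. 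Define a translation $(\cdot)^\sharp$ on the formulae generated by the \apkm grammar that is the identity on propositions and negated propositions, commutes with $\Diamond$, $\Box$, $\vee$, $\wedge$ and application, sends each lambda variable $f^{X_i}_j$ to $y^{X_i}_j$, and sends each state $X_j$ to the fixpoint variable $X_j$. Since $\emptyset\vdash\delta(X_i)\colon\Prop$ holds in the implicit context that declares every $X_j\colon\tau_{X_j}$ and every $f^{X_i}_j\colon\tau^{X_i}_j$, the formula $\Phi_{X_i} := \lambda(y^{X_i}_1\colon\tau^{X_i}_1).\dots\lambda(y^{X_i}_{n_i}\colon\tau^{X_i}_{n_i}).\,(\delta(X_i))^\sharp$ is a well-typed \hlmu-formula of type $\tau_{X_i}$ whose free variables all lie in $\mathcal{X}$, and no type of order exceeding $k$ occurs in it.

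Next I impose a syntactic nesting order reflecting the parity labeling. Order the states as $X_{\pi(1)},\dots,X_{\pi(n)}$ so that $\Lambda(X_{\pi(1)})\le\dots\le\Lambda(X_{\pi(n)})$, assign $X_i$ the fixpoint kind $\mu$ if $\Lambda(X_i)$ is odd and $\nu$ if it is even, and consider the hierarchical equation system with equations $X_{\pi(\ell)} =_{\sigma_{\pi(\ell)}} \Phi_{X_{\pi(\ell)}}$ in which $X_{\pi(1)}$ is solved first (innermost) and $X_{\pi(n)}$ last (outermost). By Bekić's theorem --- the standard translation from hierarchical equation systems to nested fixpoint formulae --- together with $\alpha$-renaming of the duplicated fixpoint subformulae to restore well-namedness, one obtains a closed \hlmu-formula $\varphi_{\mathcal{A}}$ of type $\Prop$ denoting the $X_{\text{init}}$-component of the system's solution. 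This elimination only substitutes and copies subformulae and introduces fixpoint binders at the types $\tau_X$, which are already present, so $\operatorname{ord}(\varphi_{\mathcal{A}})\le k$.

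It then remains to prove $\mathcal{T},t\models\varphi_{\mathcal{A}}$ iff $\mathcal{A}$ accepts $\mathcal{T},t$. The natural route is to match the unfolding (model-checking) game of $\varphi_{\mathcal{A}}$, played in Krivine style so that a position carries a subformula, a closure environment for its $\lambda$-variables, and a stack of argument closures, with a run of $\mathcal{A}$ over $\mathcal{T},t$: the modal and boolean moves correspond one-to-one, reading a variable $X_i$ unfolds its fixpoint exactly as $\mathcal{A}$ expands $\delta(X_i)$ and creates a fresh environment, and evaluating a $y^{X_i}_j$ of ground type returns to its binding environment exactly as in $\mathcal{A}$, which by Lemma~\ref{lem:env-run-well} is the moment the corresponding environment is closed and its priority popped from the priority stack. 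The one genuine subtlety --- and the step I expect to be the main obstacle --- is reconciling the two winning conditions: $\mathcal{A}$ uses the \emph{stair} parity condition (only priorities that are never popped count, in the limit of the prefix-ordered chain of priority stacks), whereas $\varphi_{\mathcal{A}}$ carries the ordinary fixpoint parity condition determined by nesting depth. These agree precisely because the chosen nesting order refines $\Lambda$: by Lemma~\ref{lem:env-run-well}(\ref{lem:env-run-well-4}) the priority stack is at every step exactly the sequence of priorities of the currently active (not-yet-closed) fixpoints, so its limit is the spine of fixpoints that stabilizes, and a fixpoint that is permanently left contributes nothing to the \hlmu-semantics just as a popped priority contributes nothing to the stair condition; since outermore fixpoints were given the larger priorities, the dominating priority of the stabilized spine is exactly the dominating fixpoint of the nested formula, and even matches $\nu$. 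Hence \verifier wins the \apkm game from a position iff she wins the \hlmu model-checking game from the matching position, which yields the claimed equivalence.

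The routine parts are the definition of $(\cdot)^\sharp$, the type-checking of each $\Phi_{X_i}$, the order bound, and the step-by-step bisimulation between the two game graphs; the content that actually needs the earlier development is the winning-condition reconciliation via Lemma~\ref{lem:env-run-well} and the fact that the priority labeling can be turned into a consistent syntactic nesting order at all.
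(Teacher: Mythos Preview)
Your overall plan --- read off the equation system from $\mathcal{A}$, nest the fixpoints in order of increasing priority, and apply a Beki\'c-style elimination to obtain a single closed formula --- is exactly the route the paper sketches. The type and order bookkeeping is fine.

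The gap is in the correctness argument, and it is precisely the point the paper singles out. Your claim that ``outermore fixpoints were given the larger priorities'' in $\varphi_{\mathcal{A}}$ fails after Beki\'c elimination once application is present. Suppose some $\delta(Z)$ contains an application $(X\,Y)$ with $\Lambda(X)>\Lambda(Y)$. In your ordering $Y$ is solved first; the resulting $Y^{*}$ still has $X$ free, and when you later close $X$, that free occurrence inside $Y^{*}$ is replaced by the full formula $X^{*}$. In the \emph{operand} position of the application you therefore obtain a subformula of the shape $\sigma_Y Y.\,\Phi_Y[\,X := \sigma_X X.\,\Phi'_X\,]$, i.e.\ a syntactically inner $\sigma_X$ sitting under a syntactically outer $\sigma_Y$ although $\Lambda(X)>\Lambda(Y)$. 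So nesting depth does not refine $\Lambda$ on $\varphi_{\mathcal{A}}$, the two winning conditions cannot be aligned as you describe, and the last paragraph of your argument breaks. The paper flags exactly this phenomenon (``fixpoints can occur as operator--operand pair where the operator has a higher priority'') and resolves it by a preprocessing step on $\mathcal{A}$ --- a duplication of arguments, attributed to Kobayashi and Lozes --- that restores proper precedence in the syntax tree before the Beki\'c-style unfolding is applied.

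A secondary issue: you appeal to an ``unfolding (model-checking) game of $\varphi_{\mathcal{A}}$ \dots\ with the ordinary fixpoint parity condition determined by nesting depth'', but no such game is set up in the paper (the \hlmu\ semantics here is purely denotational), and for higher-order formulae the ordinary and the stair parity conditions diverge. Either you must show directly that the stair-parity acceptance of $\mathcal{A}$ computes the denotational solution of the hierarchical system --- that is the actual content of the lemma, not a corollary of Beki\'c --- or you must first preprocess $\mathcal{A}$ as the paper indicates so that a game-level comparison becomes sound.
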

We skip the proof for space considerations. It rests on the idea that a fixpoint state $X$ computes the formula $\sigma X. \lambda f_1^{X}.\dotsc \lambda f_{n_X}^X. \delta(X)$ where $\sigma $ is $\mu$ if $\Lambda(X)$ is odd and $\nu$ otherwise. However, the translation is subject to the same exponential blowup in size (but not in order) that occurs when translating ordinary parity automata into the modal $\mu$-calculus. Moreover, further preprocessing is necessary because fixpoints can occur as operator-operand pair where the operator has a higher priority. In this case, a duplication of arguments is necessary to ensure proper precedence of fixpoints in the syntax tree. \footnote{This idea is due to Naoki Kobayashi and {\'{E}}tienne Lozes.}


\begin{corollary}
Emptiness of \aapkm is undecidable.
\end{corollary}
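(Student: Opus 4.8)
The plan is to derive undecidability from the equivalence between \aapkm-emptiness and \hlmu-satisfiability. By Lemma~\ref{lem:embedding-apkm} and Lemma~\ref{lem:embedding-hfl}, the tree languages recognised by \aapkm are, up to effective translation, exactly those definable by closed \hlmu-formulae of ground type; hence an \aapkm is non-empty if and only if the corresponding \hlmu-formula is satisfiable over the class of fully infinite binary trees. It therefore suffices to reduce a known undecidable problem to this satisfiability problem, and I would reduce from the halting problem for two-counter (Minsky) machines, which is undecidable.

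Given a two-counter machine $M$, I would fix an encoding of a halting run of $M$ into a labelled fully infinite binary tree: one successor direction serves as a ``time axis'' along which the successive configurations $(q,c_1,c_2)$ of the run are listed, while the other direction together with extra propositions is used to record the control state and to represent each counter value in unary by a path of matching length. Then I would construct an \aapkm $\mathcal{A}_M$ (equivalently, the \hlmu-formula) that accepts a tree precisely when it encodes a genuine run of $M$ that starts from the zero configuration and reaches a halting state. Most of the checks $\mathcal{A}_M$ has to perform are regular and are handled by its plain parity part, so a constant number of priorities suffices and no essential fixpoint alternation is involved; the one genuinely non-regular obligation is to verify, between two consecutive configurations, that a counter was correctly incremented, decremented, or tested for zero, which amounts to comparing the lengths of two unary-encoded paths. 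This is exactly the context-free, stack-like bookkeeping that the Krivine-machine component of an \aapkm can carry out already at order one, by means of a least-fixpoint state that walks the two paths in lockstep; here one merely has to respect the restriction that each $\delta(X)$ pops the whole stack, which affects how the comparison routine is phrased but not whether it is expressible.

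Correctness of the reduction is then immediate: $M$ halts if and only if some fully infinite binary tree encodes a finite halting run of $M$ (padded arbitrarily beyond the halting point, which is harmless since every node still has both successors and a dedicated proposition flags the meaningful part), if and only if $\mathcal{A}_M$ is non-empty. The main obstacle I expect is not conceptual but a matter of careful bookkeeping: arranging the tree layout so that two consecutive counter values sit in positions the automaton can traverse in parallel, and ensuring that the padding required to stay within fully infinite binary trees cannot let a tree that does not encode a valid run slip past the acceptance condition. If one prefers to avoid the encoding altogether, the corollary also follows simply by combining Lemma~\ref{lem:embedding-apkm} and Lemma~\ref{lem:embedding-hfl} with the known undecidability of \hlmu-satisfiability.
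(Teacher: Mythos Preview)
The paper gives no explicit proof of this corollary at all: it is placed immediately after Lemma~\ref{lem:embedding-apkm} and Lemma~\ref{lem:embedding-hfl}, and the intended argument is simply that emptiness of \aapkm coincides with satisfiability of \hlmu via these translations, and \hlmu-satisfiability is already known to be undecidable from \cite{viswaviswa}. This is precisely the one-line route you mention in your final sentence, and that alone would have sufficed.

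Your main proposal---a direct reduction from the halting problem for two-counter machines via a tree encoding of runs, using an order-one fixpoint to compare unary counter encodings---is a genuinely different and more self-contained argument. It is plausible and follows the standard pattern for showing such undecidability results from scratch; the non-regular counter comparison is indeed within reach of order-one \hlmu/\aapkm. What it buys you is independence from the cited undecidability result for \hlmu, at the cost of the bookkeeping you yourself flag (arranging consecutive counter values for lockstep traversal, and handling the padding on fully infinite trees). The paper simply does not bother with any of this, since the equivalence lemmas let it inherit undecidability for free.
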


\begin{corollary}
For any \emph{finite} tree $\Transsys$ (or any finite Kripke structure), and any \aapkm $\mathcal{A}$ of order $k$, it is decidable in time $k$-fold exponential in the size of $\mathcal{T}$ whether $\mathcal{T}\models\mathcal{A}$.
\end{corollary}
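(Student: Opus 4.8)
The plan is to reduce the problem to model checking \hlmu over the finite structure and then to evaluate the resulting formula by explicitly computing the denotational semantics of Figure~\ref{hlmu-semantics}. By Lemma~\ref{lem:embedding-hfl} there is an \hlmu-formula $\varphi_{\mathcal{A}}$ of order at most $k$ with $\Transsys,\epsilon\models\varphi_{\mathcal{A}}$ iff $\mathcal{A}$ accepts $\Transsys$; the translation increases the size of $\mathcal{A}$ by at most an exponential and leaves the order unchanged, and in any case $|\mathcal{A}|$, hence $|\varphi_{\mathcal{A}}|$, is a constant for the purposes of this statement, so it suffices to decide $\Transsys,\epsilon\models\varphi_{\mathcal{A}}$ within a $k$-fold exponential time bound in $n:=|T|$, where $T$ is the (finite) node set of $\Transsys$. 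One could instead compute, directly from $\delta$, the semantics of $\sigma X_{\text{init}}.\lambda\vec f.\delta(X_{\text{init}})$, but going through Lemma~\ref{lem:embedding-hfl} avoids repeating the soundness argument; attempting to solve the acceptance game of Section~\ref{sec:apkm} directly is less convenient, since over a finite tree that game still has infinitely many configurations because environments and the two stacks grow without bound, so one would first have to collapse it to a finite game.

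Since $T$ is finite, every lattice $\llbracket\tau\rrbracket$ is finite, and I would first establish by induction on $\tau$ that for a type $\tau$ of order $j$: an element of $\llbracket\tau\rrbracket$ is representable — as an explicit value table in the higher-order case — in space $j$-fold exponential in $n$ (reading $0$-fold exponential as polynomial), hence $|\llbracket\tau\rrbracket|$ is $(j{+}1)$-fold exponential in $n$, and the height of $\llbracket\tau\rrbracket$ (the length of its longest strictly increasing chain) is $j$-fold exponential in $n$. At ground type $|\llbracket\Prop\rrbracket|=2^{n}$ with representation size $n$ and height $n$; for $\tau=\tau_1\to\dotsb\to\tau_m\to\Prop$ a value table has one entry (an $n$-bit set) per tuple of arguments, and by the induction hypothesis applied to the $\tau_i$ the number of such tuples is at most $\prod_i|\llbracket\tau_i\rrbracket|$, which is $\mathrm{ord}(\tau)$-fold exponential, so the table has $\mathrm{ord}(\tau)$-fold exponential size and $|\llbracket\tau\rrbracket|\le(2^{n})^{\text{that many}}$ is $(\mathrm{ord}(\tau){+}1)$-fold exponential; and since the monotone functions into $\llbracket\Prop\rrbracket$ embed into the product lattice $\llbracket\Prop\rrbracket^{S}$ over the set $S$ of argument tuples, the height is at most $|S|\cdot n$, again $\mathrm{ord}(\tau)$-fold exponential. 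It follows that the per-node operations of Figure~\ref{hlmu-semantics} that are not binders — union, the $\Diamond$/$\Box$ image operation, and function application (a table lookup) — run in time polynomial in the sizes of their operands, hence $k$-fold exponential in $n$ for types of order at most $k$.

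With these estimates in hand the algorithm is the standard one: evaluate $\varphi_{\mathcal{A}}$ bottom-up along its finite syntax tree following Figure~\ref{hlmu-semantics}, carrying an interpretation of the currently bound variables; a $\lambda$-abstraction node of type $\tau\to\tau'$ is evaluated by tabulating its body over all ($\mathrm{ord}(\tau)$-fold exponentially many) values of the bound variable, and a fixpoint node $\sigma X.\psi$ is evaluated by Knaster--Tarski iteration, i.e.\ by iterating the monotone operator $d\mapsto\llbracket\psi\rrbracket_{\eta[X\mapsto d]}$ from $\bot_{\tau}$ (resp.\ $\top_\tau$ for $\nu$) until it stabilises, which by the height bound takes at most a $k$-fold exponential number of iterations. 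Nested fixpoints are resolved as in the \mucalc{} algorithm, recomputing an inner fixpoint afresh for each value of an outer variable; the nesting depth is bounded by the number of fixpoint variables of $\varphi_{\mathcal{A}}$, a constant in $n$. A routine induction on the structure of $\varphi_{\mathcal{A}}$ then gives a running time of the form $\bigl(k\text{-fold exponential in }n\bigr)^{O(|\varphi_{\mathcal{A}}|)}$, still $k$-fold exponential in $n$, and the answer is read off from whether $\epsilon\in\llbracket\varphi_{\mathcal{A}}\rrbracket\subseteq T$. The main obstacle is precisely the complexity bookkeeping of the middle step: one must check that neither the explicit tabular representation of higher-order values, nor the length of each fixpoint iteration, nor the recomputation of a constant number of nested fixpoints costs more than one extra level of exponentiation per order — in particular that the \emph{height}, and not just the cardinality, of an order-$k$ lattice stays $k$-fold exponential, so that the iteration counts do not blow up; beyond that the argument is the familiar fixpoint-iteration analysis, and the translation for Kripke structures of unrestricted branching is no different.
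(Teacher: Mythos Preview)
Your proposal is correct and follows the same route the paper intends: the corollary is stated without proof, as an immediate consequence of Lemma~\ref{lem:embedding-hfl} together with the known $k$-fold exponential model-checking complexity of order-$k$ \hlmu from the literature~\cite{viswaviswa}. You simply spell out the standard fixpoint-iteration argument behind that cited result rather than invoking it as a black box; the lattice-size, height, and tabulation bounds you give are the usual ones and are sound.
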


\section{The Alternation Hierarchy for Alternating Parity Krivine Automata}

\subsection{Alternation Classes}

We define the 
semantic alternation class via the least number of priorities of any equivalent automaton.

\begin{definition}
We define the classes
\begin{itemize}
\item $\Sigma^{\operatorname{sem}}_n$ as the set of all \aapkm equivalent to one with at most $n$ priorities such that the highest is even
\item $\Pi^{\operatorname{sem}}_n$ as the set of all \aapkm equivalent to one with at most $n$ priorities such that the highest is odd
\item $\Delta^{\operatorname{sem}}_n$ = $\Sigma^{\operatorname{sem}}_n \cap \Pi^{\operatorname{sem}}_n$.
\end{itemize}
\end{definition}

\begin{remark}
The following inclusions hold:
\begin{align*}
  \Sigma^{\operatorname{sem}}_n &\subseteq \Sigma^{\operatorname{sem}}_{n+1} &   \Pi^{\operatorname{sem}}_n &\subseteq \Pi^{\operatorname{sem}}_{n+1} \\
  \Sigma^{\operatorname{sem}}_n &\subseteq \Pi^{\operatorname{sem}}_{n+1} &  \Pi^{\operatorname{sem}}_n &\subseteq \Sigma^{\operatorname{sem}}_{n+1} \\
  \Sigma^{\operatorname{sem}}_n &\subseteq \Delta^{\operatorname{sem}}_{n+1} & \Pi^{\operatorname{sem}}_n &\subseteq \Delta^{\operatorname{sem}}_{n+1} \\
  \Delta^{\operatorname{sem}}_n & \subseteq \Delta^{\operatorname{sem}}_{n+1}& &
\end{align*}
\end{remark}

Note that the alternation classes are independent of the order of an automaton. For a \hlmu-formula $\varphi$, we say that $\varphi$ is in some alternation class if there is an equivalent \apkm in that class.

\begin{observation}
If $\mathcal{A} \in \Sigma^{\operatorname{sem}}_n$ then $\overline{\mathcal{A}} \in \Pi^{\operatorname{sem}}_n$, if $\mathcal{A'} \in \Pi^{\operatorname{sem}}_n$ then $\overline{\mathcal{A}'} \in \Pi^{\operatorname{sem}}_n$.
\end{observation}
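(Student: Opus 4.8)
The plan is to reduce the statement to the complementation Observation proved above together with a small bookkeeping argument on priorities. Suppose $\mathcal{A} \in \Sigma^{\operatorname{sem}}_n$; by definition there is an automaton $\mathcal{B}$ equivalent to $\mathcal{A}$ whose priority labelling takes at most $n$ distinct values, the largest of which is even. First I would observe that complementation commutes with equivalence: for every tree $\mathcal{T}$ we have $\mathcal{T} \models \overline{\mathcal{A}}$ iff $\mathcal{T} \not\models \mathcal{A}$ iff $\mathcal{T} \not\models \mathcal{B}$ iff $\mathcal{T} \models \overline{\mathcal{B}}$, so $\overline{\mathcal{A}}$ and $\overline{\mathcal{B}}$ accept the same trees. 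Hence it suffices to show $\overline{\mathcal{B}} \in \Pi^{\operatorname{sem}}_n$.

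Next I would inspect the effect of the complement construction on the priority labelling. As described after the earlier Observation, $\overline{\mathcal{B}}$ is obtained from $\mathcal{B}$ by replacing each modal and boolean operator by its dual and increasing the priority of every state by one; the transition semantics and the stair parity condition then make a winning strategy for one player translate into a winning strategy for the other (this is exactly what underlies $\mathcal{T}\models\mathcal{B}$ iff $\mathcal{T}\not\models\overline{\mathcal{B}}$). The shift by one is a bijection on priority values: it neither merges nor splits them, so $\overline{\mathcal{B}}$ uses exactly as many distinct priorities as $\mathcal{B}$, i.e.\ at most $n$; and it sends the (unique) largest priority of $\mathcal{B}$, which is even, to its successor, which is odd and is still the largest priority occurring in $\overline{\mathcal{B}}$. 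Thus $\overline{\mathcal{B}}$ is an automaton with at most $n$ priorities whose largest is odd, which is precisely a witness for $\overline{\mathcal{B}} \in \Pi^{\operatorname{sem}}_n$, and therefore $\overline{\mathcal{A}} \in \Pi^{\operatorname{sem}}_n$. The converse implication is entirely symmetric: if $\mathcal{A}' \in \Pi^{\operatorname{sem}}_n$ with witness $\mathcal{B}'$, then $\overline{\mathcal{B}'}$ shifts the odd maximum up to an even maximum and keeps the number of priorities fixed, placing $\overline{\mathcal{A}'}$ in $\Sigma^{\operatorname{sem}}_n$.

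The only point requiring any care — and the one I would spell out — is that the semantic classes are stated in terms of ``at most $n$ priorities such that the highest is even/odd'' rather than a fixed normalized range, so one must check that the shifted labelling of $\overline{\mathcal{B}}$ still counts as a legitimate \aapkm priority labelling. This is harmless: any labelling using $k \le n$ distinct values with largest value of a prescribed parity can be postcomposed with the unique monotone, parity-preserving bijection onto $\{1,\dots,k\}$ (or $\{0,\dots,k-1\}$), which changes neither the accepted language, nor the number of distinct priorities, nor the parity of the maximum, so the two normalization conventions in the definition of an \aapkm cause no trouble. Everything else is a direct appeal to the complementation Observation, so I expect no real obstacle here — the argument is essentially a one-line reduction once this normalization remark is in place.
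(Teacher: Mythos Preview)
Your proposal is correct and is exactly the argument the paper has in mind: the Observation is stated without proof and is meant to follow immediately from the earlier complementation Observation (shift all priorities by one, dualize the boolean and modal operators), together with the evident fact that equivalence is preserved under complementation. Your extra remark about renormalizing the priority range is a reasonable piece of hygiene, and your reading of the second clause as ``$\overline{\mathcal{A}'} \in \Sigma^{\operatorname{sem}}_n$'' rather than the printed ``$\Pi^{\operatorname{sem}}_n$'' is the intended one, as confirmed by how the Observation is actually used in the proof of the main Theorem.
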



\subsection{Trees Encoding Acceptance Games}
For each $n \geq 1$, define a set of propositions $\mathcal{P}_n$ as $\{D,C,V,T,F,F_0,\dotsc,F_{n-1}\}$ as well as a set $\mathcal{P}'_n$ as $\{D,C,V,T,F,F_1,\dotsc,F_{n}\}$. 

Let $n \geq 1$. Consider a  tree $\mathcal{T}$ and some \aapkm $\mathcal{A}$ with at most $n$ priorities, over $\mathcal{P}_n$ or $\mathcal{P}'_n$ (depending on whether the highest priority is odd or even). We construct a tree $T(\mathcal{T},\mathcal{A})$ over the same set of propositions which encodes the game tree  of the acceptance game $G(\mathcal{T},\mathcal{A})$ of $\mathcal{A}$ over $\mathcal{T}$. A state labeled by $C$ signals that \spoiler picks a successor configuration, a state labeled by $D$ signals that \verifier picks a successor configuration, a state labeled by $F_i$ signals that priority $i$ is added to the priority stack in this configuration and a state labeled by $V$ signals that the top priority is being removed. Configurations where the priority stack is not being manipulated and neither player picks a successor configuration are treated as if \verifier picks a successor, but both subtrees of $T(\mathcal{T},\mathcal{A})$ are isomorphic.

The tree is generated inductively. Each position $(t,(Q,e), e', \Gamma, \Delta)$ in the acceptance game induces a subtree, with the root of the tree being generated by the initial position. At each vertex, exactly one proposition $P$ from $\mathcal{P}_n$, respectively $\mathcal{P}'_n$ is true. We say that this vertex is labeled by $P$.
\begin{itemize}
\item The subtree induced by a position with $Q$ of the form $X$ is labeled $F_{\Lambda(X)}$. Both children are the subtree induced by $(t,(\delta(X), e''), e'', \Gamma',\Delta')$ where $e'', \Gamma'$ and $\Delta'$ are as per the transition relation.

\item The subtree induced by a position with $Q$ of the form $f^X_{j}$ is labeled $V$ if both $f^X_j$ is of type $\Prop$ and $e(f^X_{j}) = (Q', e'')$ with $e'\not = e''$. Otherwise, it is labeled $D$. Both children are the subtree induced by the successor configuration as per the transition relation.

\item The subtree induced by a position with $Q$ of the form $(\psi_1 \psi_2)$ is labeled $D$. Both children are the subtree induced by $(t,(\psi_1,e),e', \Gamma (\psi_2,e),\Delta)$.


\item The subtree induced by a position with $Q$ of the form $\psi_1 \vee \psi_2$ is labeled $D$. The left subtree is the subtree induced by $(t,(\psi_1,e),e',\Gamma,\Delta)$, the right subtree is that induced by $(t,(\psi_2,e),e',\Gamma,\Delta)$.

\item The subtree induced by a position with $Q$ of the form $\psi_1 \wedge \psi_2$ is labeled $C$. The left subtree is the subtree induced by $(t,(\psi_1,e),e',\Gamma,\Delta)$, the right subtree is that induced by $(t,(\psi_2,e),e',\Gamma,\Delta)$.

\item The subtree induced by a position with $Q$ of the form $\Diamond \varphi$ is labeled $D$. The left subtree is the subtree induced by $(t0,(\varphi,e),e',\Gamma, \Delta)$, the right subtree is that induced by $(t1,(\varphi,e),e',\Gamma, \Delta)$. 

\item The subtree induced by a position with $Q$ of the form $\Box \varphi$ is labeled $C$. The left subtree is the subtree induced by $(t0,(\varphi,e),e',\Gamma, \Delta)$, the right subtree is that induced by $(t1,(\varphi,e),e',\Gamma, \Delta)$. 

\item The subtree induced by a position with $Q$ of the form $P$ or $\neg P$ is labeled $T$ if $\mathcal{T},s \models Q$ and $F$ else. Both children are the subtree induced by $(t,(Q,e),e', \Gamma, \Delta)$ again.
\end{itemize}

It is easy to verify that this defines an infinite, fully binary tree. Figure~\ref{runtree2} shows an example. 
\begin{figure}
\caption{Part of a $T(\mathcal{A}, \mathcal{T})$ for $\mathcal{A}$ and $\mathcal{T}$ from Example~\ref{example-inf-cont}. Omitted subtrees are isomorphic to their sibling if present or not shown for space considerations. $C_i$ refers to the configuration from Figure~\ref{run-exml} that induces the subtree and is not part of the label.}
\label{runtree2}
\begin{tikzpicture}
[level distance=0.7cm ,sibling distance=1cm,
  level 4/.style={sibling distance=3cm},
   level 5/.style={sibling distance=1.5cm},
  level 8/.style={sibling distance=2cm}]
  \node {$C_0,F_1$}
    child {node {\dots}}
    child {node {$C_1, D$}
      child {node {\dots}}      
      child {node {$C_2, F_1$}
        child {node {\dots}}
        child { node {$C_3, D$}
                   child { node {$D$}
                   child {node {\dots}}        
                     child { node {$V$}
                     child {node {\dots}}        
                       child { node {$D$}
                       child {node {\dots}}        
                         child { node {$F$}
                           child {node {$F$}
                             child {node {\dots}  }  }
                           child {node {\dots}  }    
              }
              }
              }                                      
                   }
                child { node {$C_4, C$}
                child {node {\dots}}        
                  child { node {$C_5, F_0$}
                  child {node {\dots}}        
                    child { node {$C_6, D$}
                    child {node {\dots}}  
                      child { node {$C_7, F_1$}
                      child {node {\dots}}        
                        child { node {$C_8, D$}
                        child {node {$C$}
                            child {node {\dots}}
                           }        
                          child { node {$C_9, D$}
                          child {node {\dots}}        
                             child { node {$C_10, V$}
                             child {node {\dots}}        
                               child { node {$C_{11}, D$}
              }
              }
              }
              }
              }      
              }
              }
              }        
              }
        }
    };
\end{tikzpicture}
\end{figure}
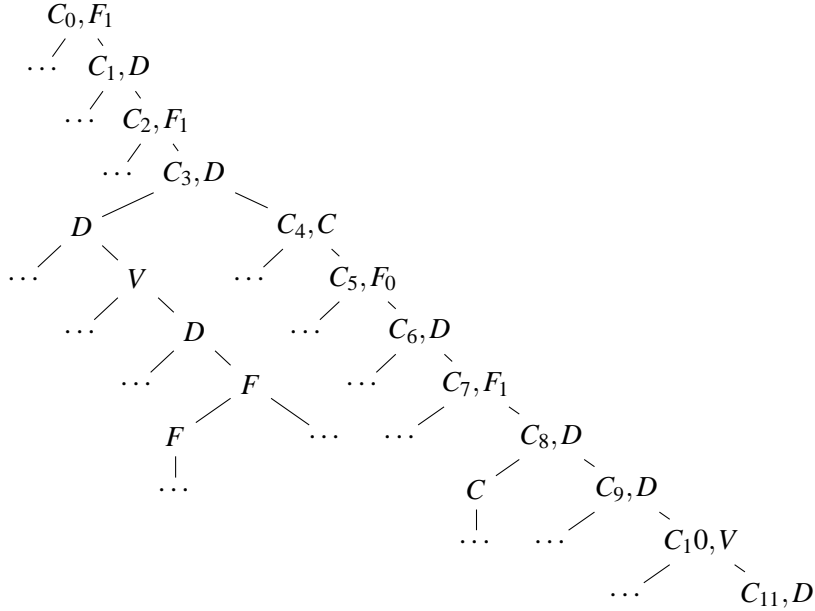

\subsection{Hard Automata}

We now construct \aapkm that are hard for their alternation classes. Following Arnold's \cite{arnold} and Lange's \cite{infcomp06} proofs, these automata accept trees enconding an acceptance game that is won by \verifier, respectively \spoiler.

Consider the $\mathcal{P}_n$-, respectively $\mathcal{P}'_n$-\aapkm $\mathcal{A}^\Sigma_n$  and $\mathcal{A}^\Pi_n$ defined for each $n \geq 1$ as follows:
\begin{itemize}
\item The fixpoint states are $\{I,O, X_{n-1},\dotsc,X_0\}$, 
\item the type of $I$ is $\Prop$, the type of the other states is $\Prop \to\Prop$,
\item the initial state is $I$,
\item $\Lambda(I) =\Lambda(O) = 0, \Lambda(X_i) = i$ for $\mathcal{A}^\Sigma_n$ and $\Lambda(I) = \Lambda(O) = 1$ and $\Lambda(X_i) = i+1 $ for $\mathcal{A}^\Pi_n$,
\item $\delta(I) = O\,\true$, $\delta(X_i) = x_0\colon\Prop \mapsto (X_{i-1}\, x_0)$ for $i>0, \delta(X_0) = x_0 \mapsto (O\,x_0)$,
\item $\delta(O) = x_0 \mapsto \neg F \wedge ( T \vee \bigwedge 
\left(\begin{aligned}
(D) & \rightarrow \Diamond(O\, x_0) \\ 
(C) & \rightarrow \Box (O\, x_0) \\    
(V) & \rightarrow \Diamond (x_0) \\ 
(F_{n-1}) & \rightarrow \Diamond \big(X_{n-1}\, (O\, x_0)\big) \\ 
          & \dots \\
(F_{0}) & \rightarrow \Diamond \big(X_{0}\, (O\, x_0)\big) \\ 
\end{aligned}\right) ).$
\end{itemize}
Again, it is easy to verify that $\mathcal{A}^\Sigma_n \in \Sigma^{\operatorname{sem}}_n$ and that $\mathcal{A}^\Pi_n \in \Pi^{\operatorname{sem}}_n$.

These automata are equivalent to the \hlmu-formulae $(\sigma_{n-1} X_{n-1}. \lambda x_0 (\big(\dotsb \sigma X_0. \lambda x_0. \psi\big)\dotsb) \, x_0\true$ where the $\sigma_i$ are $\mu$, respectively $\nu$ depending on the alternation class, and $\psi = \delta(O)$.
\begin{definition}
\label{def-rounds}
Consider a play of $\mathcal{A}^\Sigma_n$ over a $\mathcal{P}_n$-tree, respectively of $\mathcal{A}^\Pi_n$ over a $\mathcal{P}'_n$-tree generated from an acceptance game. A \emph{round} in this play consists of a configuration where the current closure is $O$ and all subsequent configurations until it is $O$ again. An environment is \emph{tied to a round} if it is created during that round.
\end{definition}
A round begins with the automaton in $O$. Unless the current tree node is labeled by $F$, \spoiler chooses the right conjunct in $\delta(O)$, and \verifier, unless the current state is labeled by $T$, chooses the right disjunct. \spoiler then picks the conjunct indicated by the label of the current subtree in the big conjunction and \verifier picks the right part of the implication. Any different choice results in the player making that choice instantly loosing the game. One of the players is then in charge of picking a successor subtree. Depending on the conjunct picked by \spoiler, the game continues in a new instance of $O$, goes through $X_i,\dotsc,X_0,O$ for some $i$ or continues with the content of $x_0$. The latter will always lead to another instance of $O$, as we will see below. In either case, the game continues in the next round.
\begin{observation}
Each round corresponds to exactly one configuration in the acceptance game of $\mathcal{A}$, namely that which induces the subtree in $T(\mathcal{T},\mathcal{A})$ during the first configuration of the round. Furthermore, the current subtree in the game for $\mathcal{A}^\Sigma_n$ is labeled by $C$ if and only if the configuration that induces it has a conjunction or a box as the top operator in the formula part of the current closure. 
\end{observation}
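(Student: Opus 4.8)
The plan is to establish both assertions by unwinding two definitions: the transition semantics applied to $\delta(O)$, which was largely carried out in the paragraph preceding the observation, and the clause-by-clause definition of $T(\mathcal{T},\mathcal{A})$.

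First I would pin down the shape of an arbitrary round of a play of $\mathcal{A}^\Sigma_n$ over $T(\mathcal{T},\mathcal{A})$. A round begins in a configuration whose current closure is $O$; let $v \in \{0,1\}^*$ be the vertex of $T(\mathcal{T},\mathcal{A})$ at that point. Expanding $O$ binds the unique stack entry to $x_0$ and installs $\delta(O) = \neg F \wedge (T \vee \bigwedge(\ldots))$; the subsequent navigation through this boolean skeleton --- the choices of \spoiler among conjuncts and of \verifier among disjuncts --- contains no modal operator, hence does not move the current tree vertex, and it is forced in the sense that any deviation loses the play for the deviating player: \spoiler must take the right conjunct unless $v$ is labeled $F$, \verifier must take the right disjunct unless $v$ is labeled $T$, and \spoiler must select exactly the conjunct $(P) \to \ldots$ whose guard $P$ is the label of $v$, after which \verifier must take the action side. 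If $v$ is labeled $T$ or $F$ the play ends inside the round; otherwise the action side contains exactly one modal operator, which moves the vertex to a child $v' \in \{v0,v1\}$ (chosen by \verifier after $\Diamond$, by \spoiler after $\Box$). I would then check, by inspecting $\delta(O)$ and the $\delta(X_i)$, that no further modal operator occurs between $v'$ and the next occurrence of $O$: in the $(D)$ and $(C)$ cases $O$ reappears after one unfolding of $O\,x_0$; in each $(F_i)$ case $O$ reappears after walking the chain $X_i,\dots,X_0$ of unary fixpoints, whose transition relations are pure applications; and in the $(V)$ case the ground-type variable $x_0$ is looked up and, after the associated pops of the priority stack, chased through a finite chain of ancestor environments --- finite because every environment has only finitely many predecessors --- to a closure with formula part $O\,x_0$. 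The last point uses the auxiliary fact, read off from $\delta(I)$, $\delta(O)$ and the $\delta(X_i)$, that the only closures ever supplied as an argument of $O$ have formula part $\true$, $x_0$, or $O\,x_0$, the value $\true$ arising only if the initial priority is popped, which would mean $\mathcal{A}$ has emptied its priority stack and so does not happen along a run. Consequently every round either finishes the play or returns the automaton to $O$ at a child of $v$; in particular rounds are well defined and the starting vertex $v$ is determined by the round.

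Next I would invoke the inductive definition of $T(\mathcal{T},\mathcal{A})$, which assigns to every vertex $w$ a well-defined configuration $\kappa(w)$ of the acceptance game $G(\mathcal{T},\mathcal{A})$: $\kappa(\epsilon)$ is the initial configuration, and for $w$ the two children are generated, in the $\vee,\wedge,\Diamond,\Box$ clauses, by the two successor configurations, and otherwise by two copies of the unique deterministic successor (or, for $P$, $\neg P$, of the position itself). Hence the configuration inducing the subtree at the start of a round is $\kappa(v)$, and the round corresponds to it and to no other. I would also note that the child $v'$ reached during the round is induced by a successor of $\kappa(v)$ and that the player choices match those available in $G(\mathcal{T},\mathcal{A})$, so the configurations attached to the successive rounds of a play trace out a play of $G(\mathcal{T},\mathcal{A})$; this is not strictly part of the statement, but is what the strictness argument will rely on later.

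Finally, the second sentence is a direct inspection of the construction: among its clauses the label $C$ is produced exactly by the clause for $\psi_1 \wedge \psi_2$ and by the clause for $\Box \varphi$, every other clause yielding a label in $\{D, V, F_{\Lambda(X)}, T, F\}$; therefore the subtree induced by a configuration carries the label $C$ if and only if the top operator of the formula part of its current closure is a conjunction or a box. The one genuinely delicate point in the whole argument is the $(V)$ case of the round-shape analysis --- verifying that the chased value of $x_0$ is always an $O\,x_0$-closure and that the intervening priority-stack pops never strand the play --- and it is exactly there that the finiteness of the predecessor set of an environment and the restricted shape of the arguments of $O$ are used; everything else is routine bookkeeping over the two definitions.
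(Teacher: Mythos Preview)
Your proposal is correct and follows essentially the same approach as the paper: the observation is stated there without a formal proof, being justified by the paragraph immediately preceding it that analyzes how a round of $\mathcal{A}^\Sigma_n$ unfolds through $\delta(O)$ together with the clause-by-clause definition of $T(\mathcal{T},\mathcal{A})$. Your treatment is in fact more thorough, particularly in the $(V)$-case, where the paper merely notes that the content of $x_0$ ``will always lead to another instance of $O$, as we will see below'', deferring those details to the subsequent lemma on the shape of the priority stack.
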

Note that each configuration in a play for $\mathcal{A}^\Sigma_n$, respectively $\mathcal{A}^\Pi_n$ over suitable trees is part of exactly one round, with the exception of the first two configurations which have current closures $(I)$ and $(O\, \true)$.

We call a round a \emph{$V$-round} if \spoiler picks the conjunct with $V$ on the left of the implication, we call it an \emph{$F_k$-round} if he picks the conjunct with $F_k$ on the left of the implication and we call a round a \emph{plain round} if he picks the conjuncts with $C$ or $D$ on the left of the implication. A round is \emph{closed} if the environment tied to the single occurrence of $O$ during that round is closed. A $V$-round is always closed immediately.

\begin{lemma}
\label{stack-prio}
Consider a play of $\mathcal{A}^\Sigma_n$, respectively $\mathcal{A}^\Pi_n$ over a $\mathcal{P}_n$-tree, respectively $\mathcal{P}'_n$-tree generated from an acceptance game and let the automaton be at the 
start of some round, i.e., just before reading another occurrence of $O$. Let $(R_i)_{i \in I}$ be the sequence of unclosed rounds played so far, in order. Set $p^\Sigma(R) = 0$ if $R$ is a plain round, set $p^\Pi(R) = 1$ if $R$ is a plain round, set $p^\Sigma(R) = 0,k,\dotsc,1,0$ if $R$ is an $F_k$-round and set $p^\Pi(R) = 1,k+1,\dotsc,2,1$ if $R$ is a plain round. Then the priority stack of $\mathcal{A}^\Sigma_n$ from bottom to top is the concatenation of the $p^\Sigma(R_i)$ from first to last and the priority stack of $\mathcal{A}^\Pi_n$ from bottom to top is the concatenation of the $p^\Pi(R_i)$ from first to last.

Moreover, all unclosed environments are tied to unclosed rounds. Tied to any plain round is a single environment for its ocurrence of $O$ and it binds $x_0$ the last environment of the first unclosed round before. Tied to an $F_k$-round is a sequence of environments for the occurrences of $O,X_k,\dotsc,X_0$. The environment for $X_0$ is the last environment, they all bind $x_0$ to $x_0$ of the previous environment except the environment for $X_k$ which binds $x_0$ to $(O\,x_0)$ in the environment for $O$ of its own round. Here, the initial unfolding for $I$ is considered a dummy round.
\end{lemma}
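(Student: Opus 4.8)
The plan is to prove both assertions simultaneously by induction on the number of rounds played, strengthening the induction hypothesis so that it also records, at the start of each round: the closure stack $\Gamma$ contains exactly one closure, namely $(x_0,e)$ where $e$ is the last environment of the most recent unclosed round (or $(\true,e_I)$ for the very first round), and $e$ is moreover the environment currently being computed. For the base case, after the dummy unfolding of $I$ the automaton sits at $(O,e_I)$ with $\Gamma = (\true,e_I)$, priority stack $\Lambda(I)$ — which is $p^\Sigma(\text{plain})$, resp.\ $p^\Pi(\text{plain})$, so the dummy counts as a plain round — and the only environment in existence is $e_I$; all invariants are immediate.

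For the inductive step I would trace a single round $R$, assuming the invariant at its start. Reading the occurrence of $O$ creates $e^{(R)}_O$, whose parent is the environment currently computed — by hypothesis the last environment of the most recent unclosed round — which binds $x_0$ to the single closure on $\Gamma$, i.e.\ to $x_0$ in that same environment (or to $\true$ if $R=1$), and which pushes $\Lambda(O)$. Navigating $\delta(O)$ changes neither stack nor computed environment; if the current node of $T(\mathcal{T},\mathcal{A})$ forces a loss the play ends and there is nothing to prove, and otherwise \spoiler's choice in the big conjunction fixes the type of round $R$. For a plain round ($D$ or $C$) the play reaches $((O\,x_0),e^{(R)}_O)$, pushes $(x_0,e^{(R)}_O)$, and arrives at $(O,e^{(R)}_O)$, so round $R$ is now an unclosed plain round with single environment $e^{(R)}_O$, the priority stack has grown by $\Lambda(O) = p^\Sigma(\text{plain})$, resp.\ $p^\Pi(\text{plain})$, and the strengthened invariant holds at the start of round $R+1$. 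For an $F_k$ round the play reaches $((X_k\,(O\,x_0)),e^{(R)}_O)$ and successively reads $X_k,X_{k-1},\dots,X_0$; each read creates a fresh environment whose parent is the previously created one, pushes the matching priority $\Lambda(X_i)$, and installs the binding of $x_0$ exactly as in the statement (the environment for $X_k$ to $(O\,x_0)$ in $e^{(R)}_O$, that for $X_i$ with $i<k$ to $x_0$ in the environment for $X_{i+1}$), ending at $(O,e_{X_0})$; the appended priorities are $\Lambda(O),\Lambda(X_k),\dots,\Lambda(X_0)$, which equal $0,k,\dots,1,0 = p^\Sigma(F_k)$ for $\mathcal{A}^\Sigma_n$ and $1,k+1,\dots,2,1 = p^\Pi(F_k)$ for $\mathcal{A}^\Pi_n$, and the invariant is restored.

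The only delicate case is a $V$ round. Here the play reaches $(x_0,e^{(R)}_O)$, and reading this ground-type variable triggers, by the transition rules and Lemma~\ref{lem:env-run-well}, a sequence of pop-and-follow steps: each step pops the top priority, moves the computed environment up to the target of the current $x_0$-binding, and follows that binding. I would analyze this cascade by an inner induction using the structure of the unclosed rounds provided by the outer hypothesis. The first step pops $\Lambda(O)$ of round $R$, so round $R$ is immediately closed, which is why $V$ rounds never occur among the unclosed rounds and $p^\Sigma,p^\Pi$ need not be defined for them. The cascade then traverses the $O$-environments of the unclosed rounds preceding $R$ — each a plain round, or a formerly-$F_k$ round that an earlier $V$ round has already stripped down to its $O$-environment, whose $O$-environment binds $x_0$ to $x_0$ in its own predecessor's last environment — popping each such $\Lambda(O)$ in turn, until it reaches the most recent unclosed round that is still a genuine $F_k$ round; there it passes through $e_{X_0},\dots,e_{X_k}$, popping $\Lambda(X_0),\dots,\Lambda(X_k)$, and, because $e_{X_k}$ binds $x_0$ to $(O\,x_0)$ in that round's $O$-environment $e_O$, arrives at $((O\,x_0),e_O)$ with $e_O$ the computed environment; an application step then pushes $(x_0,e_O)$ and reaches $(O,e_O)$, the start of round $R+1$. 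Consequently round $R$ and every plain round traversed are closed, the $F_k$ round traversed keeps only its environment $e_O$ together with the priority $\Lambda(O)$ on the stack — that is, it is demoted to a plain round, consistently with the ``plain round'' clause of the statement — the popped priorities are exactly the intended suffix of the stack, and the strengthened invariant holds at the start of round $R+1$.

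Given the case analysis, the first assertion follows by concatenating the per-round contributions and discarding closed rounds, and the second by reading off the environment bindings case by case, understanding ``plain'' and ``$F_k$'' as the dynamic classification (a $V$ round demotes an $F_k$ round to a plain round). The main obstacle is the $V$-round cascade: one must verify that it stops precisely at the $O$-environment of the most recent un-demoted $F_k$ round and pops exactly the priorities claimed, and this is where the structural part of the induction hypothesis — the exact shape and $x_0$-bindings of the environments tied to each unclosed round — is indispensable. One also has to carry along the minor special cases created by the dummy round, whose environment $e_I$ carries no binding for $x_0$.
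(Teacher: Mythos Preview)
Your approach is essentially the paper's: induction over the sequence of rounds with a case split on plain/$F_k$/$V$, and your explicit strengthening of the invariant (recording that $\Gamma$ holds exactly $(x_0,e)$ for $e$ the last environment of the most recent unclosed round) is precisely what the paper uses implicitly. Your treatment of the $V$-round cascade is in fact more careful than the paper's: you correctly observe that the cascade strips the $X_0,\dotsc,X_k$ environments from the most recent genuine $F_k$-round while leaving its $O$-environment alive, so that this round henceforth behaves like a plain round (your ``demotion''), whereas the paper's sentence ``this will close all unclosed previous plain rounds until the next $F_i$-round, but nothing more'' glosses over exactly this point---so your dynamic reading of ``plain'' versus ``$F_k$'' is what actually makes the invariant go through.
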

\begin{proof}
The proof is by induction over the play. At the beginning of the very first round, the priority stack contains only the priority for $I$ and $(\true,e_0)$ is on the stack. A plain round will consume the content from the stack, which is $(x_0,e)$ of the previous round, or $(\true,e_0)$ for the very first round, and tie $x_0$ of its single ocurrence of $O$ to it. Moreover, it will add $0$ to the priority stack. An $F_k$-round $R$ will also consume $(x_0,e)$, respectively $(\true,e_0)$ from the stack and tie $x_0$ of the single ocurrence of $O$ to it. During the round, the automaton will unfold $X_k$ and tie $(O\,x_0)$ of that environment to $X_k$'s $x_0$, then unfold $X_{k-1},\dotsc,X_0$ and create a chain of $x_0$ pointing to $x_0$ of the environment before. Moreover, it will put the sequence $p^\Sigma(R)$, respectively $p^\Pi(R)$ on the priority stack.

A $V$-round will put priority $0$ on the stack, tie the $x_0$ of its single occurrence of $O$ to $x_0$ of the previous unclosed round and then immediately read it. Consequently, all the environments of the previous unclosed round will be closed, including the ocurrrence of $0$, and all the priorites tied to it will be popped. Notably, this will close all unclosed previous plain rounds until the next $F_i$-round, but nothing more.
\end{proof}

\begin{lemma}
For all $\mathcal{P}_n$, respectively $\mathcal{P}'_n$-automata $\mathcal{A} \in \Sigma^{\operatorname{sem}}_n$ and all infinite, fully binary $\mathcal{P}_n$ -trees $\mathcal{T}$, we have that $\mathcal{T} \models \mathcal{A}$ if and only if $T(\mathcal{T},\mathcal{A}) \models \mathcal{A}^\Sigma_n$ and for all $\mathcal{P}_n$, respectively $\mathcal{P}'_n$-automata $\mathcal{A'} \in \Pi^{\operatorname{sem}}_n$, we have that $\mathcal{T} \models \mathcal{A'}$ if and only if $T(\mathcal{T},\mathcal{A'}) \models \mathcal{A}^\Pi_n$.
\end{lemma}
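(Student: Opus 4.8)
To prove the statement I would set up a strategy‑preserving correspondence between the two acceptance games: translate every winning strategy for \verifier in $G(\mathcal{T},\mathcal{A})$ into one for $G(T(\mathcal{T},\mathcal{A}),\mathcal{A}^\Sigma_n)$ and back, which by the definition of $\models$ yields $\mathcal{T}\models\mathcal{A}$ iff $T(\mathcal{T},\mathcal{A})\models\mathcal{A}^\Sigma_n$. Fix $\mathcal{T}$ and $\mathcal{A}$. The backbone is the round/configuration dictionary from the observation following Definition~\ref{def-rounds}: after the two opening configurations with closures $I$ and $(O\,\true)$, the rounds of a play of $G(T(\mathcal{T},\mathcal{A}),\mathcal{A}^\Sigma_n)$ are in bijection with the configurations of $G(\mathcal{T},\mathcal{A})$, a round starting at a subtree of $T(\mathcal{T},\mathcal{A})$ being matched with the configuration that induced that subtree. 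I would carry out the argument for $\mathcal{A}^\Sigma_n$ and $\Sigma^{\operatorname{sem}}_n$ and deduce the $\mathcal{A}^\Pi_n$ statement by the symmetric substitutions (replacing $\mathcal{A}^\Sigma_n$, $\mathcal{P}_n$, $p^\Sigma$ by $\mathcal{A}^\Pi_n$, $\mathcal{P}'_n$, $p^\Pi$).

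First I would make this dictionary move‑for‑move. Inside a round over a subtree labelled $\ell$, a case distinction on $\ell$ against the shape of $\delta(O)$ shows that every move of the round is forced except possibly the final choice of a child subtree, and that a player who plays a non‑forced move immediately reaches a closure which is a possibly negated proposition on which the rules declare that player the loser --- e.g.\ for $\ell=D$, \spoiler must take the conjunct $(D)\rightarrow\Diamond(O\,x_0)$ or \verifier wins via $\neg D$; for $\ell=T$, \verifier wins via the disjunct $T$; analogously for $\ell\in\{C,V,F,F_k\}$. Hence, along any play in which neither player throws the game away, the only genuine choices left are the child‑subtree choice made by \verifier at a $D$‑round and by \spoiler at a $C$‑round, and by construction of $T(\mathcal{T},\mathcal{A})$ these two children are exactly the two successor configurations that \verifier (resp.\ \spoiler) may choose between at the matching configuration of $G(\mathcal{T},\mathcal{A})$ when its closure has a $\Diamond$ or $\vee$ (resp.\ a $\Box$ or $\wedge$) at the top, and coincide otherwise. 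So a winning strategy $\sigma$ for \verifier in $G(\mathcal{T},\mathcal{A})$ lifts to a strategy $\sigma'$ for \verifier in $G(T(\mathcal{T},\mathcal{A}),\mathcal{A}^\Sigma_n)$ --- always play the forced move, mirror $\sigma$ at $D$‑rounds --- and, in the other direction, a winning strategy for \verifier in the second game may be assumed to play forced moves at forced positions and then projects to a strategy in the first; in both cases every resulting play of the second game is either won by \verifier because \spoiler deviated, or corresponds round‑by‑round to a play of the first.

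It then remains to check that matched plays have the same winner. For a finite play this is immediate: a play of $G(\mathcal{T},\mathcal{A})$ stops at a closure $P$ or $\neg P$ over a node $t$ exactly when the matched play enters a round over a subtree labelled $T$ (if $\mathcal{T},t\models Q$) or $F$, and \verifier wins that round iff the label is $T$, i.e.\ iff $\mathcal{T},t\models Q$. For an infinite play I would invoke Lemma~\ref{stack-prio}: at the start of each round the priority stack of $\mathcal{A}^\Sigma_n$ is the concatenation, over the still‑open rounds, of the blocks $p^\Sigma(R)$, a plain round contributing the single priority $0$, an $F_k$‑round the block $0,k,\ldots,1,0$ with greatest entry $k$, and a $V$‑round, being closed at once, nothing lasting. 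Hence the greatest priority that ultimately survives on $\mathcal{A}^\Sigma_n$'s priority stack is $\max(\{0\}\cup\{k : \text{some }F_k\text{-round is never closed}\})$. By Definition~\ref{def:occurrence} and the labelling rules for $T(\mathcal{T},\mathcal{A})$, a round is an $F_k$‑round iff the matching configuration of $\mathcal{A}$ reads a fixpoint variable of priority $k$, and --- again by Lemma~\ref{stack-prio}, whose proof shows that a $V$‑round closes exactly the most recently opened still‑open $F_i$‑round together with the plain rounds opened after it, mirroring the way one $V$‑labelled node of $T(\mathcal{T},\mathcal{A})$ records $\mathcal{A}$ returning from an environment to its parent --- that $F_k$‑round is never closed iff the environment tied to that occurrence of $\mathcal{A}$ is never closed. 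Thus $\{k : \text{some never-closed }F_k\text{-round}\}$ equals $\{\Lambda(X) : \text{some never-closed }X\text{-environment of }\mathcal{A}\}$, so the greatest priority surviving on $\mathcal{A}^\Sigma_n$'s stack equals the greatest priority surviving on $\mathcal{A}$'s stack; since the winning condition of each game is precisely that this quantity be even, the two plays are won by the same player, which completes the argument.

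The real work is this last step. Steps one and two are routine case analyses once the round/configuration dictionary is available, but making the two stair conditions agree hinges on the stack bookkeeping of Lemma~\ref{stack-prio}, in particular on pinning down exactly which rounds a $V$‑round closes, so that never‑closed $F_k$‑rounds of $\mathcal{A}^\Sigma_n$ correspond bijectively to never‑closed priority‑$k$ environments of $\mathcal{A}$ and the block expansion $k\mapsto 0,k,\ldots,1,0$ neither invents nor hides the decisive highest priority.
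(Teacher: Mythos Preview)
Your proposal is correct and follows essentially the same route as the paper: lift strategies via the round/configuration dictionary, dispose of finite plays by the $T$/$F$ labels, and for infinite plays use Lemma~\ref{stack-prio} to match never-closed $F_k$-rounds of $\mathcal{A}^\Sigma_n$ with never-closed priority-$k$ environments of $\mathcal{A}$, so that the dominating stair priority coincides. One small wording issue: the winning condition asks for the greatest priority occurring \emph{infinitely often} in the limit, so your displayed $\max$ should range over those $k$ with \emph{infinitely many} never-closed $F_k$-rounds rather than ``some''; your bijection already gives this, so the fix is purely cosmetic.
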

\begin{proof}
We only show the case for $\mathcal{P}_n$ and we only show that \verifier has a winning strategy in the acceptance game for $\mathcal{A}^\Sigma_n$ over $T(\mathcal{T},\mathcal{A})$ if she has one for $\mathcal{A}$ over $\mathcal{T}$, for a $\mathcal{P}_n$-automaton $\mathcal{A}$ in $\Sigma^{\operatorname{sem}}_n$ and $\mathcal{T}$ a $\mathcal{P}_n$-tree. The other cases are similar.
Assume that \verifier has a winning strategy in the latter game. 

The correspondence between rounds in the game for $T(\mathcal{T},\mathcal{A})$ and configurations in the game for $\mathcal{A}$ suggests the following strategy for \verifier in the former game: Stay within subtrees that represent configurations that follow her winning strategy. Since the underlying game is assumed to be winning for \verifier and the root of $T(\mathcal{T},\mathcal{A})$ represents such a configuration by assumption, she can maintain this invariant in any round where she picks the successor configuration. In rounds where \spoiler picks the successor configuration, both of his choices must be winning for \verifier in the underlying game for $\mathcal{A}$ over $\mathcal{T}$ for otherwise the current configuration would not be winning for \verifier. Clearly, following this strategy will guarantee that \verifier wins any finite play of the game for $\mathcal{A}^\Sigma_n$ by avoiding a node labeled $F$.

It remains to show that \verifier wins any infinite play when following the strategy above. This is because the sequence of unclosed nonplain rounds in the game for $\mathcal{A}^\Sigma_n$, and priority stack in the game for $\mathcal{A}$ correspond like this: If $(F_{k_i})_{i \in I}$ is the sequence of unclosed nonplain rounds, the $(k_i)_{i \in I}$ is the priority stack of $\mathcal{A}$. This follows from an induction over the two plays: Before the first round of the game for $T(\mathcal{T},\mathcal{A})$, the sequence of unclosed rounds is empty, and so is the priority stack of the correspondig configuration of $\mathcal{A}$. Any plain round will add a $0$, the least priority, to the priority stack of $\mathcal{A}^\Sigma_n$ and will not change the priority stack of $\mathcal{A}$. An $F_k$-round will add $k$ to the priority stack of $\mathcal{A}$ and will add an unclosed $F_k$-round to the play of $\mathcal{A}^\Sigma_n$. A $V$-round will remove one priority $k$ from the priority stack for $\mathcal{A}$ and will close a number of plain rounds and exactly one nonplain round in the game for $\mathcal{A}^\Sigma_n$. By the induction hypothesis, this is an $F_k$-round. 

Hence, after both plays are finished, the highest priority to occur infinitely often on the stack for $\mathcal{A}$ is $k$ if and only if there are infinitely many unclosed $F_k$-rounds, but only finitely many unclosed $F_{k'}$ rounds for $k'>k$. It follows from Lemma~\ref{stack-prio} that the highest priority to occur infinitely often on the stack for $\mathcal{A}^\Sigma_n$ is $k$ as well. Since \verifier wins the first game by assumption, that number must be even.
\end{proof}

\begin{lemma}
\label{lem:banachfpt}
For each $n \geq 1$ and every $\mathcal{P}_n$-\aapkm $\mathcal{A} \in \Sigma^{\operatorname{sem}}_n$, there is a unique $\mathcal{T}^*$ such that $T(\mathcal{T}^*,\mathcal{A}) = \mathcal{T}^*$. For each $n \geq 1$ and every $\mathcal{P}_n$ \aapkm $\mathcal{A} \in \Pi^{\operatorname{sem}}_n$, there is a unique $\mathcal{T}^*$ such that $T(\mathcal{T}^*,\mathcal{A}) = \mathcal{T}^*$.
\end{lemma}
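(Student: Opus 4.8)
The plan is to run the standard Banach Fixpoint argument of Arnold and Lange in the complete metric space of trees. Fix the relevant proposition set ($\mathcal{P}_n$ in the $\Sigma^{\operatorname{sem}}_n$ case, $\mathcal{P}'_n$ in the $\Pi^{\operatorname{sem}}_n$ case) and let $\mathfrak{T}$ be the set of all infinite fully binary trees over it. Equip $\mathfrak{T}$ with the ultrametric $d(\mathcal{T}_1,\mathcal{T}_2) = 2^{-k}$, where $k$ is the least length of a word $t \in \{0,1\}^*$ with $\mathcal{T}_1(t) \neq \mathcal{T}_2(t)$ (and $d(\mathcal{T}_1,\mathcal{T}_2)=0$ if no such $t$ exists). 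This is a nonempty complete metric space, since a Cauchy sequence of trees stabilises level by level. First I would record that $\mathcal{T} \mapsto T(\mathcal{T},\mathcal{A})$ is an endofunction of $\mathfrak{T}$: the construction in the previous subsection always yields an infinite fully binary tree, and since $\mathcal{A}$ has at most $n$ priorities, labelled in $\{0,\dots,n-1\}$ (resp.\ $\{1,\dots,n\}$), every node label produced lies in $\mathcal{P}_n$ (resp.\ $\mathcal{P}'_n$).

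The core of the argument is a structural observation about the generation of $T(\mathcal{T},\mathcal{A})$, proved by induction along its inductive definition. Claim~(i): the configuration $(t,(Q,e),e',\Gamma,\Delta)$ that induces a given node $u$ of $T(\mathcal{T},\mathcal{A})$ does not depend on $\mathcal{T}$ at all. Indeed, inspecting the cases of the construction, $\mathcal{T}$ enters only through the choice of label $T$ versus $F$ at a node whose formula part $Q$ is a (possibly negated) proposition, and in that case both children are induced by the very same configuration again; in every other case the successor configurations are determined from the current configuration alone. Hence the inducing configurations, and therefore the underlying shape of the tree, are identical for all $\mathcal{T}$. Claim~(ii): if $u$ has depth $d \geq 1$, then the tree position $t$ in its inducing configuration satisfies $|t| \leq d-1$. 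This holds because $t$ is extended (by exactly one bit) only when passing a node whose formula part is of the form $\Diamond\varphi$ or $\Box\varphi$, while every other step leaves $t$ unchanged; and the step out of the root, whose formula part is $X_{\text{init}} \in \mathcal{X}$ (so the root is an $F_{\Lambda(X_{\text{init}})}$-node), is not of that kind, so at most $d-1$ of the $d$ steps on the path from the root to $u$ extend $t$.

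With (i) and (ii) in hand, contraction is immediate. Suppose $\mathcal{T}_1,\mathcal{T}_2 \in \mathfrak{T}$ agree on all words of length $\leq m$, i.e.\ $d(\mathcal{T}_1,\mathcal{T}_2) \leq 2^{-(m+1)}$. By~(i) the trees $T(\mathcal{T}_1,\mathcal{A})$ and $T(\mathcal{T}_2,\mathcal{A})$ have the same shape with the same inducing configuration at every node, hence the same label at every node except possibly at nodes whose formula part is a (negated) proposition. At such a node $u$ of depth $d \leq m+1$, the label is $T$ precisely when $\mathcal{T}_i,t \models Q$, and by~(ii) we have $|t| \leq d-1 \leq m$, so $\mathcal{T}_1(t) = \mathcal{T}_2(t)$ and the two labels coincide. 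Thus $T(\mathcal{T}_1,\mathcal{A})$ and $T(\mathcal{T}_2,\mathcal{A})$ agree on all words of length $\leq m+1$, so $d(T(\mathcal{T}_1,\mathcal{A}),T(\mathcal{T}_2,\mathcal{A})) \leq 2^{-(m+2)} = \tfrac12\, d(\mathcal{T}_1,\mathcal{T}_2)$; the borderline cases $d(\mathcal{T}_1,\mathcal{T}_2) \in \{0,1\}$ are handled the same way, the case $1$ using that the root label $F_{\Lambda(X_{\text{init}})}$ is $\mathcal{T}$-independent. Hence $T(\cdot,\mathcal{A})$ is a contraction with Lipschitz constant $\tfrac12$, and the Banach Fixpoint Theorem yields a unique $\mathcal{T}^* \in \mathfrak{T}$ with $T(\mathcal{T}^*,\mathcal{A}) = \mathcal{T}^*$. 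Replacing $\mathcal{P}_n$ by $\mathcal{P}'_n$ throughout gives the statement for $\mathcal{A} \in \Pi^{\operatorname{sem}}_n$. The one step that needs care is Claim~(ii): the off-by-one bound $|t| \leq d-1$ relies on the first transition (reading the initial fixpoint variable) not being modal; without this the map would be non-expansive but not a strict contraction, and uniqueness of the fixpoint could fail.
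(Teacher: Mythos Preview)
Your proof is correct and follows the same Banach fixpoint strategy as the paper: equip the space of infinite binary $\mathcal{P}_n$- (resp.\ $\mathcal{P}'_n$-) trees with the usual ultrametric, observe completeness, and show that $\mathcal{T}\mapsto T(\mathcal{T},\mathcal{A})$ is a contraction. Your version is in fact more careful than the paper's: the paper's justification for contraction invokes the state $I$ and ``a full rotation through $\delta(O)$'', which are features of the specific hard automata $\mathcal{A}^\Sigma_n$, $\mathcal{A}^\Pi_n$ rather than of an arbitrary $\mathcal{A}$ as the lemma requires; your Claims~(i) and~(ii) give the right general argument, namely that the inducing configurations are $\mathcal{T}$-independent and that the initial transition out of $X_{\text{init}}$ is non-modal, so the tree position lags the depth by at least one.
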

\begin{proof}
The sets of all $\mathcal{P}_n$-trees, respectively the sets of all $\mathcal{P}'_n$-trees, form metric spaces via the metric $d(t,t') = 2^{-i}$, where $i$ is the first level on which $t$ and $t'$ differ. It is well known that these spaces are complete \cite{DBLP:journals/fuin/ArnoldN80}. Moreover, on all of these spaces, the mapping $f\colon \mathcal{T}\mapsto T(\mathcal{T}, \mathcal{A}^{\operatorname{sem}}_n)$ is a contraction in the sense of the Banach Fixpoint Theorem since the game trees of two trees that differ at level $i$ will coincide at least up to level $i+1$. This is because the game with $\mathcal{A}^{\operatorname{sem}}_n$ transitions through $I$ first and a full rotation through $\delta(O)$ for each level.
Hence, by the Banach Fixpoint Theorem,  $f$ has a fixpoint $\mathcal{T}^*$. 
\end{proof}

\begin{theorem}
$\mathcal{A}^\Sigma_n \in \Sigma^{\operatorname{sem}}_n \setminus \Pi^{\operatorname{sem}}_n$ and $\mathcal{A}^\Pi_n \in \Pi^{\operatorname{sem}}_n \setminus \Sigma^{\operatorname{sem}}_n$.
\end{theorem}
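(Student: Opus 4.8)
The membership statements $\mathcal{A}^\Sigma_n \in \Sigma^{\operatorname{sem}}_n$ and $\mathcal{A}^\Pi_n \in \Pi^{\operatorname{sem}}_n$ were already noted, so the plan is to prove the two non-memberships, which are symmetric: I would carry out $\mathcal{A}^\Sigma_n \notin \Pi^{\operatorname{sem}}_n$ in detail and then observe that $\mathcal{A}^\Pi_n \notin \Sigma^{\operatorname{sem}}_n$ follows by exchanging $\Sigma$ with $\Pi$, $\mathcal{P}_n$ with $\mathcal{P}'_n$, and the two equivalence lemmas. The strategy is the diagonalization of Arnold and Lange: assume an equivalent automaton of the dual class, push it through the game-tree encoding $T(\cdot,-)$, take the Banach fixpoint of that encoding, and derive a self-contradictory equivalence on it.

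Concretely, for the first part one would assume $\mathcal{A}^\Sigma_n \in \Pi^{\operatorname{sem}}_n$ and fix an automaton $\mathcal{B}$ over $\mathcal{P}_n$ with at most $n$ priorities, the highest odd, that accepts exactly the trees accepted by $\mathcal{A}^\Sigma_n$; after a priority renaming we may take the priority set of $\mathcal{B}$ to be the canonical one for $\Pi^{\operatorname{sem}}_n$. The decisive move is to pass to the dual $\mathcal{C} := \overline{\mathcal{B}}$: by the complementation observation $\mathcal{C}$ is again a $\mathcal{P}_n$-automaton, obtained by raising every priority of $\mathcal{B}$ by one, so $\mathcal{C}$ has at most $n$ priorities with the highest even and hence $\mathcal{C} \in \Sigma^{\operatorname{sem}}_n$ with priorities in the canonical range of that class; furthermore $\mathcal{T} \models \mathcal{C}$ iff $\mathcal{T} \not\models \mathcal{B}$ iff $\mathcal{T} \not\models \mathcal{A}^\Sigma_n$ for every $\mathcal{P}_n$-tree $\mathcal{T}$. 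Passing to $\mathcal{C}$, rather than arguing with $\mathcal{B}$ directly, is precisely what makes $T(\cdot,\mathcal{C})$ an endofunction on the space of $\mathcal{P}_n$-trees: the game-tree encoding of a $\Pi^{\operatorname{sem}}_n$-automaton uses the propositions $\mathcal{P}'_n$, whereas that of a $\Sigma^{\operatorname{sem}}_n$-automaton uses $\mathcal{P}_n$.

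Next I would apply Lemma~\ref{lem:banachfpt} to $\mathcal{C}$, obtaining the unique $\mathcal{P}_n$-tree $\mathcal{T}^*$ with $T(\mathcal{T}^*,\mathcal{C}) = \mathcal{T}^*$, and combine this with the equivalence lemma immediately preceding Lemma~\ref{lem:banachfpt}. Applied to $\mathcal{C} \in \Sigma^{\operatorname{sem}}_n$ that lemma gives $\mathcal{T}^* \models \mathcal{C}$ iff $T(\mathcal{T}^*,\mathcal{C}) \models \mathcal{A}^\Sigma_n$, and by the fixpoint equation the right-hand side simplifies to $\mathcal{T}^* \models \mathcal{A}^\Sigma_n$; so $\mathcal{T}^* \models \mathcal{C}$ iff $\mathcal{T}^* \models \mathcal{A}^\Sigma_n$. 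But by the construction of $\mathcal{C}$ we also have $\mathcal{T}^* \models \mathcal{C}$ iff $\mathcal{T}^* \not\models \mathcal{A}^\Sigma_n$. Chaining the two equivalences yields $\mathcal{T}^* \models \mathcal{A}^\Sigma_n$ iff $\mathcal{T}^* \not\models \mathcal{A}^\Sigma_n$, a contradiction, so no such $\mathcal{B}$ exists. The symmetric half begins with a hypothetical $\mathcal{B} \in \Sigma^{\operatorname{sem}}_n$ over $\mathcal{P}'_n$ equivalent to $\mathcal{A}^\Pi_n$, sets $\mathcal{C} := \overline{\mathcal{B}} \in \Pi^{\operatorname{sem}}_n$ over $\mathcal{P}'_n$, uses the Banach fixpoint of $T(\cdot,\mathcal{C})$ in the space of $\mathcal{P}'_n$-trees, and invokes the $\Pi$-version of the equivalence lemma to reach the same contradiction.

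I expect the diagonalization itself to be short once the earlier lemmas are available; the one point that needs care is the bookkeeping that makes everything type-check. One must check that the assumed dual-class automaton can be taken over exactly the proposition set on which $\mathcal{A}^\Sigma_n$ and the encoding $T(\cdot,-)$ are defined, and that complementation --- which shifts all priorities by one --- sends $\Pi^{\operatorname{sem}}_n$ to $\Sigma^{\operatorname{sem}}_n$ and the $\mathcal{P}'_n$-encoding to the $\mathcal{P}_n$-encoding. This single parity shift is what closes the loop: it is the reason the Banach fixpoint of $T(\cdot,\mathcal{C})$ lies in the very space over which $\mathcal{A}^\Sigma_n$ is evaluated, so that the fixpoint equation can be substituted into the equivalence lemma.
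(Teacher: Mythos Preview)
Your argument is correct and follows exactly the paper's diagonalization: assume the dual-class membership, complement to obtain an automaton in $\Sigma^{\operatorname{sem}}_n$ (respectively $\Pi^{\operatorname{sem}}_n$), apply Lemma~\ref{lem:banachfpt} to get the fixed tree $\mathcal{T}^*$, and combine the equivalence lemma with the complementation property to reach a contradiction. The only cosmetic difference is that the paper complements $\mathcal{A}^\Sigma_n$ directly rather than first choosing a $\Pi^{\operatorname{sem}}_n$ witness $\mathcal{B}$; your route makes the priority-range and proposition-set bookkeeping (ensuring $T(\cdot,\mathcal{C})$ is an endofunction on $\mathcal{P}_n$-trees) more explicit, but the underlying proof is the same.
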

\begin{proof}
For the sake of contradiction, assume that $\mathcal{A}^\Sigma_n \in \Pi^{\operatorname{sem}}_n$. Then $\mathcal{A}' = \overline{\mathcal{A}^\Sigma_n} \in \Sigma^{\operatorname{sem}}_n$. By Lemma~\ref{lem:banachfpt}, there is $\mathcal{T}^*$ such that $T(\mathcal{T}^*,\mathcal{A}') = \mathcal{T}^*$. So by construction of $T(\mathcal{T}^*,\mathcal{A}')$, we have $\mathcal{T}^* \models \mathcal{A}^\Sigma_n$ iff $\mathcal{T}^* \models \mathcal{A}'$. But $\mathcal{A}' = \overline{\mathcal{A}^\Sigma_n}$, which is a contradiction. hence, $\mathcal{A}^\Sigma_n \notin \Pi^{\operatorname{sem}}_n$.

A similar proof works for the dual case.
\end{proof}

\begin{corollary}
For each $n$, $\Sigma^{\operatorname{sem}}_n \subsetneq \Sigma^{\operatorname{sem}}_{n+1}$ and $\Pi^{\operatorname{sem}}_n \subsetneq \Pi^{\operatorname{sem}}_{n+1}$.
\end{corollary}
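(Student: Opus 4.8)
The plan is to derive this corollary directly from the preceding theorem together with the inclusion chart in the Remark; no new metric-space or acceptance-game argument is needed. Fix $n \geq 1$. The inclusions $\Sigma^{\operatorname{sem}}_n \subseteq \Sigma^{\operatorname{sem}}_{n+1}$ and $\Pi^{\operatorname{sem}}_n \subseteq \Pi^{\operatorname{sem}}_{n+1}$ are already recorded, so all that remains is to exhibit, for each case, an automaton in the larger class that does not lie in the smaller one.

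For the $\Sigma$ case I would take the hard automaton $\mathcal{A}^\Sigma_{n+1}$. By the theorem it belongs to $\Sigma^{\operatorname{sem}}_{n+1} \setminus \Pi^{\operatorname{sem}}_{n+1}$. Suppose for contradiction that $\mathcal{A}^\Sigma_{n+1} \in \Sigma^{\operatorname{sem}}_n$. The Remark gives the ``cross'' inclusion $\Sigma^{\operatorname{sem}}_n \subseteq \Pi^{\operatorname{sem}}_{n+1}$, so we would get $\mathcal{A}^\Sigma_{n+1} \in \Pi^{\operatorname{sem}}_{n+1}$, contradicting the theorem. Hence $\mathcal{A}^\Sigma_{n+1} \in \Sigma^{\operatorname{sem}}_{n+1} \setminus \Sigma^{\operatorname{sem}}_n$, and the inclusion is strict. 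The $\Pi$ case is entirely dual: take $\mathcal{A}^\Pi_{n+1} \in \Pi^{\operatorname{sem}}_{n+1} \setminus \Sigma^{\operatorname{sem}}_{n+1}$; if it were in $\Pi^{\operatorname{sem}}_n$, the cross inclusion $\Pi^{\operatorname{sem}}_n \subseteq \Sigma^{\operatorname{sem}}_{n+1}$ would place it in $\Sigma^{\operatorname{sem}}_{n+1}$, again a contradiction. Combining the two cases with the trivial inclusions yields the claim.

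There is essentially no obstacle here; the only point requiring care is to invoke the cross inclusions $\Sigma^{\operatorname{sem}}_n \subseteq \Pi^{\operatorname{sem}}_{n+1}$ and $\Pi^{\operatorname{sem}}_n \subseteq \Sigma^{\operatorname{sem}}_{n+1}$ rather than attempting a more direct separation, and to observe that the argument uses only that $\mathcal{A}^\Sigma_{n+1}$ and $\mathcal{A}^\Pi_{n+1}$ are defined, i.e. that $n+1 \geq 1$, so the corollary holds for every $n$ in the intended range. It is worth noting that the two-sided form of the theorem (membership in one class \emph{and} non-membership in the dual class) is exactly what makes this short argument go through: strictness of the level-by-level hierarchy is really a consequence of the fact that $\mathcal{A}^\Sigma_{n+1}$ cannot be captured with $n$ priorities of \emph{either} parity.
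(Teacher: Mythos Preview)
Your proof is correct and follows essentially the same route as the paper: combine the separation theorem with the cross inclusions from the Remark. The only cosmetic difference is the choice of witness---you exhibit $\mathcal{A}^\Sigma_{n+1}\in\Sigma^{\operatorname{sem}}_{n+1}\setminus\Sigma^{\operatorname{sem}}_n$ via $\Sigma^{\operatorname{sem}}_n\subseteq\Pi^{\operatorname{sem}}_{n+1}$, whereas the paper (implicitly) uses $\mathcal{A}^\Pi_n\in\Pi^{\operatorname{sem}}_n\subseteq\Sigma^{\operatorname{sem}}_{n+1}$ together with $\mathcal{A}^\Pi_n\notin\Sigma^{\operatorname{sem}}_n$---but the logical structure is identical.
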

\begin{proof}
Since $\Pi^{\operatorname{sem}}_n \subseteq \Sigma^{\operatorname{sem}}_{n+1}$, non-strictness of $\Sigma^{\operatorname{sem}}_n \subseteq \Sigma^{\operatorname{sem}}_{n+1}$ would contradict the previous theorem. The same argument works for the dual case.
\end{proof}

\section{Discussion}

It is a priori quite surprising that the order of an \aapkm or a \hlmu-formula is not of relevance when it comes to its alternation class. In particular, the automata $\mathcal{A}^\Sigma_n$ and $\mathcal{A}^\Pi_n$ that serve as example of automata that are hard for their respective classes are of order $1$. This is surprising, since for the \hlmu-model-checking problem, which corresponds to acceptance for \aapkm, complexity is almost exclusively dictated by the order of a formula. We believe that this dichotomy stems from the way the transition relation for \aapkm is restricted to formulae of ground type. A state that would compute a higher-order function, say of type $(\Prop\to \Prop) \to (\Prop \to \Prop)$ actually does not compute the full higher-order function, but its equivalent of type $(\Prop \to \Prop) \to \Prop \to \Prop$ at a fixed argument of type $\Prop$. The first case requires computations over the full extent of a higher-order lattice, while in the second case it is sufficient to find an approximation that is good enough for the arguments in question.

\subsection*{Acknowledgements} 
I thank Martin Lange and {\'{E}}tienne Lozes for discussing the matter with me at length.

\bibliography{../literature}

\end{document}